\DeclareMathOperator\arctanh{arctanh}
\DeclareMathOperator\diag{diag}
\def\BibTeX{{\rm B\kern-.05em{\sc i\kern-.025em b}\kern-.08em
		T\kern-.1667em\lower.7ex\hbox{E}\kern-.125emX}}
\newtheorem{theorem}{Theorem}[section]
\newtheorem{lemma}[theorem]{Lemma}
\newtheorem{proposition}[theorem]{Proposition}
\newtheorem{problem}[theorem]{Problem}
\newtheorem{definition}[theorem]{Definition}
\newtheorem{assumption}[theorem]{Assumption}
\newtheorem{remark}[theorem]{Remark}
\newtheorem{example}[theorem]{Example}
\begin{document}

\title{\LARGE \bf
Co-design of Optimal Transmission Power and Controller for Networked Control Systems Under State-dependent Markovian Channels
}

\author{Bin Hu and Tua A.~Tamba
	\thanks{Bin Hu is with Department of Engineering Technology, Old Dominion University, Norfolk, VA, 23529, US. {\tt\small bhu@odu.edu}}
	\thanks{Tua A.~Tamba is with Department of Electrical Engineering, Parahyangan Catholic University, Bandung, 40141, Indonesia. {\tt\small ttamba@unpar.ac.id}}
}

\maketitle

\begin{abstract}
This paper considers a co-design problem for industrial networked control systems to ensure both the stability and efficiency properties of such systems. 
The assurance of such properties is particularly challenging due to the fact that wireless communications in industrial environments are not only subject to shadow fading but also stochastically correlated with their surrounding environments. 
To address such challenges, this paper first introduces a novel state-dependent Markov channel (SD-MC) model that explicitly captures the \emph{state-dependent features} of industrial wireless communication systems by defining the proposed model's transition probabilities as a function of both its environment's states and transmission power. 
Under the proposed channel model, sufficient conditions on \emph{Maximum Allowable Transmission Interval}~(MATI) are presented to ensure both \emph{asymptotic stability in expectation} and \emph{almost sure asymptotic stability} properties of a continuous nonlinear control system with \emph{state-dependent fading channels}. 
Based on such conditions, the co-design problem is then formulated as a constrained polynomial optimization problem (CPOP), which can be efficiently solved using semidefinite programming methods for the case of a two-state state-dependent Markovian channel. 
The solutions to such a CPOP represent optimal control and power strategies that optimize the average expected joint costs in an infinite time horizon while still respect the stability constraints. 
For a general SD-MC model, this paper further shows that sub-optimal solutions can be obtained from  linear programming formulations of the considered CPOP.
Simulation results are given to illustrate the efficacy of the proposed co-design scheme. 
	
\end{abstract}

\begin{IEEEkeywords}
	State-dependent Markovian Channel, MATI, Almost Sure Asymptotic Stability, Constrained Polynomial Optimization
\end{IEEEkeywords}

\section{Introduction}
\label{sec:intro}

\subsection{Background and Motivation}
Over the past couple of decades, wireless communication technologies have evolved rapidly and became ubiquitous in modern society. 
For instance, wireless communication protocols such as WirelessHart and WiMAX \cite{wang2016implementing,ahlen2019toward} are now considered among the important components which help improve modern industrial automation and have been successfully implemented in various industrial applications for the purpose of building efficient, safe, and reconfigurable industrial automation systems.
Building a safe and efficient industrial networked control system (NCS), however, is known to be challenging due to the fact that wireless communication channels in highly dynamic industrial environments are inherently unreliable and often subject to \emph{shadow fading} phenomenon. 
The shadow fading phenomenon, in particular, may seriously compromise system stability and performance as it causes significant degradation on the quality of the communication links. 
Moreover, the \emph{shadow fading} effect has also been well known to be correlated to moving objects and machineries in industrial environments \cite{agrawal2014long, qin2018link,ahlen2019toward,quevedo2012state}. 
Such a correlation between industrial NCSs and their external environments poses a great challenge in term of the needs for NCS channel modeling approach as well as assuring the stability and long-term efficiency of industrial automation systems.
This paper addresses such challenges by first proposing a novel \emph{State-Dependent Markov Channel}~(SD-MC) model that specifically incorporates the impact of external environments (e.g. moving objects) on the channel conditions, and then developing a co-design formalism that ensures both the NCS stability and efficiency. 

\subsection{Related Work}
This subsection will focus on reviewing existing works that are related to (i) wireless channel modeling in industrial environments and (ii) co-design methods for communication and control systems. 
Since it is beyond the scope of this paper to exhaustively overview all related topics, we urge interested readers to also refer to recent survey papers in e.g. \cite{kharb2019survey,hermeto2017scheduling,queiroz2017survey}.

Recent studies have shown that radio communication system in industrial environments often exhibits \emph{shadow fading} effect that is statistically dependent on the environment's various states and dynamics (large metal objects, moving machineries and vehicles) \cite{agrawal2014long,quevedo2012state,quevedo2013power}. 
In particular, the \emph{state-dependent features} of such a fading prevent the use of conventional channel models such as Markov chain \cite{lun2020impact} or identically distributed independent process~(i.i.d.) to capture the dynamics of communication channel in complex industrial environments \cite{ahlen2019toward, qin2018link}. 
In regard to these, research efforts have recently been devoted to developing effective channel models that correlate the temporal variations of the channel conditions with the states of external environment in different industrial settings; see e.g. \cite{ahlen2019toward, qin2018link, quevedo2012state,quevedo2013power,hu2019co}. 
In \cite{quevedo2012state,quevedo2013power}, a network state process modeled by a Markov chain was introduced to characterize the \emph{shadow fading} effects under a finite set of configurations in industrial environments. 
The state-dependent feature of wireless channels was then modeled by defining the probability of packet losses as a function of the network state process. 
Other channel models as reported in \cite{agrawal2014long,qin2018link,olofsson2016modeling} were focused on characterizing the fading statistics that were correlated with the movements  of objects and machines in industrial environments. 
In particular, the work in \cite{qin2018link} proposed a qualitative three-layer impulse responses framework to model the temporal fading effects and showed its effectiveness in capturing  nearby moving objects. 
Alternative methods to channel modeling in industrial environments are based on a combination of multiple probability distributions which captures channel dynamics with different fading parameters~(or states). 
The use of such a mixture of probability distributions are motivated by sudden changes of fading statistics that are observed during extensive channel measurements in various industrial environments \cite{olofsson2016modeling,eriksson2016long,qin2018link,ahlen2019toward,agrawal2014long,vinogradov2015measurement}.  

The SD-MC model proposed in this paper is different from the aforementioned existing models in two aspects. 
First, the works in \cite{quevedo2012state,quevedo2013power} model the external environment~(i.e. a moving vehicle) dynamics as a (semi)-Markov chain and assume that the moving vehicle cannot be controlled. 
This paper removes such uncontrollability assumptions and models the external environment as a Markov Decision Process~(MDP). 
Secondly, the models adopted in \cite{quevedo2012state,quevedo2013power, hu2019co} are confined to packet-drop channels that ignore quantization effects.
In this paper, we consider a generalized state-dependent Markov model that takes into account the presence of time varying data rates which are more realistic in real world industrial settings.

In the presence of wireless fading channels, power control as an effective means to mitigate channel fading effects has been well studied in wireless communication community  \cite{goldsmith1997variable}. 
From NCS design perspective, it is important to further ensure both the system stability and efficiency of the whole industrial system. 
These thus suggest that a joint design of power and control strategies must be considered.  
In regard of this, numerous co-design results were developed to design optimal controller \cite{ma2020efficient,zhang2018denial,rabi2016separated,di2019codesign,peters2016controller,zhang2006communication,peng2013event,zhao2018toward} or state estimator \cite{gatsis2014optimal,quevedo2013power,quevedo2012state,ren2017infinite,leong2015kalman,chakravorty2019remote,dolz2017co,qi2016optimal,zhang2016resilient} for NCSs by incorporating the impacts that the fading channels have on the design processes. 
Regarding the co-design of optimal state estimator and transmission power, the works in \cite{gatsis2014optimal,quevedo2013power,quevedo2012state,ren2017infinite,leong2015kalman,chakravorty2019remote,dolz2017co,qi2016optimal,zhang2016resilient} have shown that the optimal estimation strategies can be obtained based on Kalman filters whose structural design is independent of the used wireless communication channels, whereas the optimal power policies take the form of functions of the channel states and the innovation error of the Kalman filter. 
For the co-design of optimal controller, the main ideas in prior works are basically that of applying the so-called separation principle where the optimal design of communication and control strategies can be separated by assuming both systems are independent from each other. 
Such an  independence assumption, however, has limited applicability in complex industrial environments where the communication and control parts of the NCS are tightly coupled with the presence of \emph{state-dependent fading channels}. 
Compared with the currently existing results, the unique feature of the work presented in this paper is the incorporation of mutual interaction between the communication and control systems into the co-design process via the proposed SD-MC model. 
By exploring such state-dependent features, this paper shows that the resulting optimal co-design strategies are more robust and efficient against various levels of \emph{shadow fading} 
than those of the existing conventional methods.  

This paper extends our preliminary results in \cite{hu2020acc} in three main parts. First, this paper includes a new stability result for a weaker notion of \emph{asymptotic stability in expectation} by using a less conservative Assumption \ref{assumption}. Secondly, a linear program-based approximation method is proposed to provide tractable and computationally efficient solutions for the co-design problem. Thirdly, more simulation results are presented in this paper to further demonstrate advantages of the proposed co-design method.

\subsection{Contributions}
The main contributions of this paper are summarized below.

\begin{itemize}
\item A proposal of a novel \emph{state-dependent Markov channel model} that explicitly captures the dependency between channel states, controlled external environments, and transmission power. 
In particular, the incorporation of   \emph{state-dependent features} in the proposed channel model generalizes the traditional Markov chain and i.i.d. models. 
\item Based on the proposed channel model, this paper then derives sufficient conditions on the MATI that will assure \emph{asymptotic stability in expectation} and \emph{almost sure asymptotic stability} properties of a nonlinear NCS. 
In particular, the paper shows that the derived sufficient conditions on MATI generalizes existing results in \cite{nesic2004input,nesic2009explicit,heijmans2017computing} through the incorporation of the \emph{state-dependent} properties in the conditions. 
\item Using the derived MATI constraints, this paper then proposes a constrained polynomial optimization problem (CPOP) formulation to solve the co-design problems.
Under the proposed co-design problem, the system stability is assured by imposing the derived sufficient conditions as hard constraints in the CPOP formulation. 
The solutions to the CPOP thus represent optimal control and transmission power policies that minimize an average joint costs for both communication and control systems.
\item The next contribution of the paper is the development of efficient algorithms to solve the co-design problems by showing that the formulated CPOP can be efficiently solved using SDP methods if a two-state Markovian channel model is considered. 
For a general Markovian channel model, the formulated CPOP can also be approximated as linear programing (LP) problems whose solutions lead to sub-optimal co-design strategies. 
\item Finally, this paper also presents extensive simulation results which demonstrate the benefits/advantages of adopting the proposed co-design framework against various levels of \emph{shadow fading} that are commonly encountered in complex industrial environments.  
\end{itemize}

The paper is structured  as follows. 
A detailed description of the system framework is given in Section \ref{sec:system-framework} and then followed by  problem formulation in Section \ref{sec:problem-formulation}. 
The main results in terms of sufficient conditions on MATI and optimal co-design strategies are provided in Section \ref{sec:main-results}. 
Simulation results are presented in Section \ref{sec:simulation}. Section \ref{sec:conclusion} concludes the paper. 

\emph{Notations:} Throughout the paper, let $\mathbb{R}$, $\mathbb{Z}$ denote the sets of real and integer numbers, respectively, and $\mathbb{R}_{\geq 0}, \mathbb{Z}_{\geq 0}$ denote their non-negative counterparts. 
Let $\mathbb{R}^{n}$ and $\mathbb{R}^{n \times m}$ denote the $n$-dimensional real vector space and matrix of dimension $n \times m$, respectively. 
For a vector $ x \in \mathbb{R}^{n}$, let $|x|=\max_{i}|x_{i}|$ denotes its infinity norm, where $x_{i}$ is the $i$-th element of the vector and $1 \leq i \leq n$.
For a matrix $A \in \mathbb{R}^{n \times m}$, let $|A|:=\|A\|_{\infty}=\max_{1\leq i \leq m}\sum_{j=1}^{n}|a_{ij}|$ denotes its infinity norm. 




\section{System Framework}
\label{sec:system-framework}
This paper considers a system framework as shown in Fig. \ref{fig: sys} which consists of a \emph{nonlinear plant}, a SD-MC $\mathcal{M}_{c}$, a \emph{remote controller}, and \emph{external environment}  modeled by MDP $\mathcal{M}$.
\begin{figure}[!t]
	\centerline{\includegraphics[width=\columnwidth]{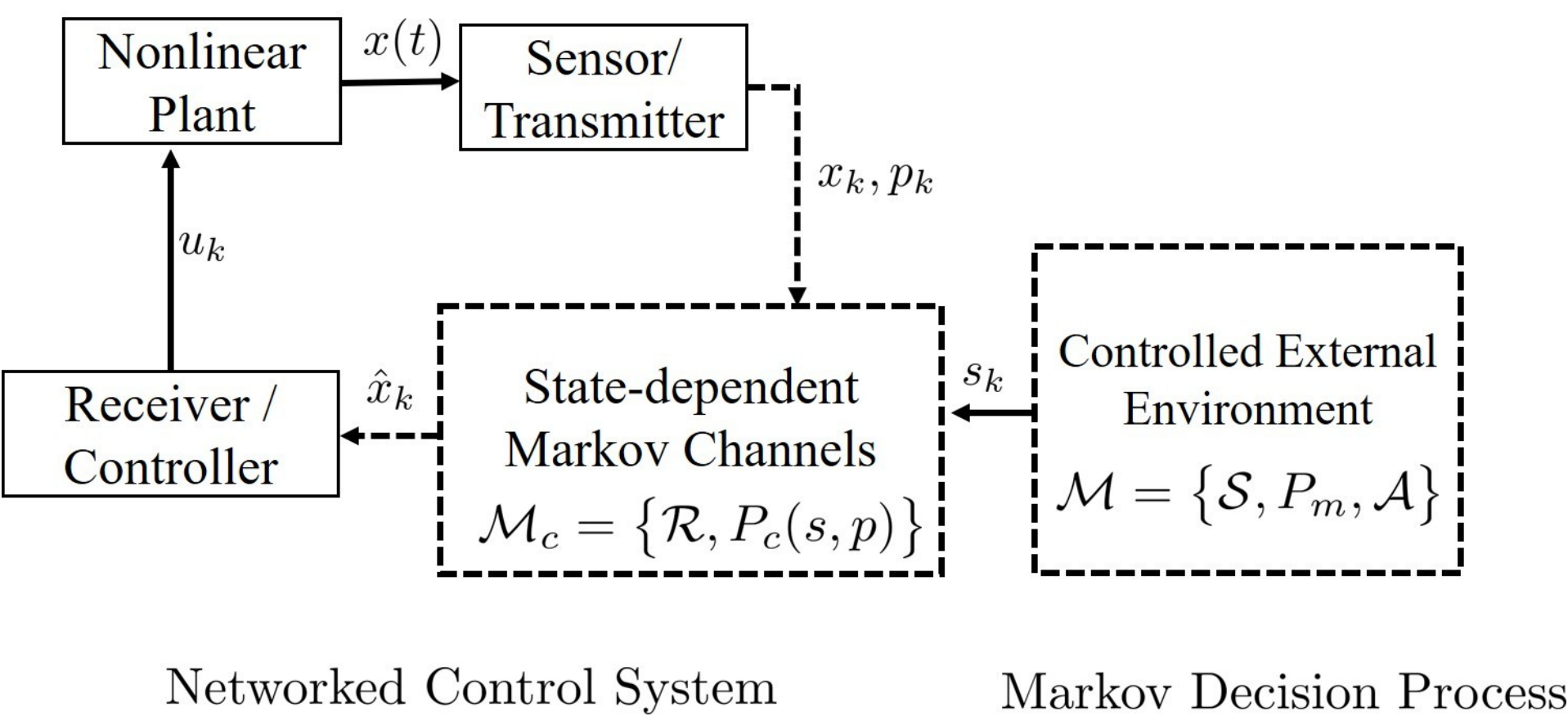}}
	\caption{Nonlinear NCS with SD-MC framework.}
	\label{fig: sys}
\end{figure}

\subsection{Nonlinear Plant} The nonlinear plant dynamics are modeled by an ODE:
\begin{align}
	\label{sys:nonlinear}
	\dot{x}=f(x, u)
\end{align}
where $x \in \mathbb{R}^{n_{x}}$ is the plant/system state and  $u \in \mathbb{R}^{n_u}$ is the system input  that is generated by a remote controller. 
The system vector fields are governed by a nonlinear function $f: \mathbb{R}^{n_{x}} \times \mathbb{R}^{n_u}  \rightarrow \mathbb{R}^{n_{x}}$ that is locally Lipschitz with respect to $x$. 
In this paper, we assume that the system state $x$ can not be accessed  directly by the controller, and thus must be transmitted through a wireless communication channel. 

\subsection{Sensor and Transmitter} The system state $x$ is first pre-processed by a \emph{sensor/transmitter} module before the transmission. This paper considers both the sampling and quantization effects on $x$. 
Specifically, let $\{t_{k}\}_{k=0}^{\infty}$ be a sequence of sampling time instants with $t_{k} < t_{k+1}$, and $x_{k}:=x(t_{k})$ be the sampled state value at time instant $t_{k}$. 
The sampled state $x(t_k)$ is then encoded by one of a finite number of symbols that are constructed based on a dynamic quantization scheme \cite{nesic2009unified}. 
Such a dynamic quantization scheme maps the sampled state $x \in \mathbb{R}^{n_x}$ into the index of a finite number of symbols. 
Specifically, let $R \in \mathbb{N}$ denote the number of bits used to construct the symbol such that the sequence of the symbols can then labeled as $\mathcal{S}=\{1, 2, \ldots, 2^{R}\}$ with a total number of $2^{R}$. 
The way of using such symbols to encode the state information $x(t_k)$ is by constructing a dynamic quantizer that is able to track the evolution of $x(t_k)$ at each transmission instant. 
The quantizer is defined as a tuple $\mathcal{Q}=(\mathcal{S}, q(\cdot), \xi)$ where $q(\cdot):\mathbb{R}^{n_x} \rightarrow \mathcal{S}$ is a quantization function which maps the system state into the symbol, while $\xi \in \mathbb{R}_{\geq 0}$ is an auxiliary variable defining the size of the quantization regions. 

A typical approach to implement the quantizer $\mathcal{Q}$ is based on the construction of a hybercubic box which evolves dynamically to contain and track the state $x$. 
To demonstrate the mechanism of such a box-based dynamic quantizer, suppose a hypercubic box is constructed at time instant $t_k$.
Let $\hat{x}(t_k)$ and $2\xi(t_k)$, respectively, denote the center and the size of such a box. 
Then, the box is divided equally into $2^{R}$ smaller sub-boxes with each sub-box is then labeled as one of the symbols in $\mathcal{S}$. 
Among all the symbols, let $q(x) \in \mathcal{S}$ denotes the symbol (sub-box) that contains the state information $x$. 
The center of that sub-box, $\hat{x}(t_k^{+})$, is then used as an updated estimate of the state information $x$ at time instant $t_k$. 
Thus, te hyper-cubic box may then updated with a new center of $\hat{x}(t_{k}^{+})$ and a new size of $\xi(t_{k}^{+})=\xi(t_{k})/2^{R}$. 
The symbol representing this updated hyber-cubic is transmitted through the wireless communication channel. 
In these regards, the dynamics of such a quantizer can be characterized by the following equations.
\begin{subequations}
	\label{sys:quantizer}
	\begin{align}
		\hat{x}(t_{k}^{+})&=h(k, q(x(t_k)), \hat{x}(t_k), \xi(t_k), R_k) \\
		\xi(t_{k}^{+})&=\frac{\xi(t_{k})}{2^{R_k}}
	\end{align}
\end{subequations}
where $R_{k}$ is the number of bits available at time instant $t_k$ in the form of a time varying variable which depends on the wireless channel conditions in real time. 
As discussed in prior work \cite{nesic2009unified,liberzon2005stabilization}, within each time interval $[t_k, t_{k+1}), \forall k \in \mathbb{Z}_{\geq 0}$, the size of the hyper-cubic box needs to be propagated to ensure that the constructed box captures the actual state $x$. 
Thus, to be more specific, we define the following differential equation to characterize the evolution of the box size over time.
\begin{align}
	\dot{\xi}(t)=g_{\xi}(\xi), \forall t \in  [t_k, t_{k+1})
	\label{eq:box-size}
\end{align}

\subsection{Remote Controller} Under the assumed dynamic quantizer $\mathcal{Q}$, this paper considers a model-based remote controller that maintains a "copy" of the plant dynamics in \eqref{sys:nonlinear} as defined as follows.
\begin{align}
	\label{sys:controller}
	\dot{\hat{x}}&=f(\hat{x}, u), \nonumber \\
	u&=\kappa(\hat{x}),  \forall t \in [t_k, t_{k+1}) 
\end{align}
with initial state $\hat{x}(t_{k})=\hat{x}(t_{k}^{+})$ in the time interval $[t_k, t_{k+1})$. The control function $\kappa(\cdot):\mathbb{R}^{n_x} \rightarrow \mathbb{R}^{n_u}$ is a "nominal" controller that is selected to stabilize the dynamic system in \eqref{sys:controller} without considering the effect of the network.

Based on the system dynamics in \eqref{sys:nonlinear}, the dynamic quantizer in \eqref{sys:quantizer} and \eqref{eq:box-size}, and the remote controller in \eqref{sys:controller}, the closed loop system can be characterized as a stochastic hybrid system defined as below:
\begin{subequations}
	\label{sys:continuous}
	\begin{align}
		\dot{x}&=\tilde{f}(x, e) \\
		\dot{e}&=g_{e}(x, e) \\
		\dot{\xi}&=g_{\xi}(\xi), \forall t \in (t_k, t_{k+1}) 
	\end{align}
\end{subequations}
and 
\begin{subequations}
	\label{sys:jump}
	\begin{align}
		e(t_{k}^{+})&=J_{e}(k, x(t_k), e(t_k), \xi(t_k), R_k) \\
		\xi(t_{k}^{+})&=J_{\xi}(\xi(t_k), R_k), \forall k \in \mathbb{Z}_{\geq 0}
	\end{align}
\end{subequations}
where $e:=x-\hat{x}$ denotes the estimation error and $\tilde{f}(x, e)=f(x, \kappa(x-e))$, $g_{e}(x, e)=f(x, \kappa(x-e))-f(x-e, \kappa(x-e))$, $J_{e}(k, x(t_k), e(t_k), \xi(t_k), R_k)=x(t_k)-h(k, q(x(t_k)), x(t_k)-e(t_k), \xi(t_k), R_k)$, and $J_{\xi}=\xi(t_k)2^{-R_k}$. 
Model \eqref{sys:continuous} basically characterizes the continuous dynamics of the closed loop system while model \eqref{sys:jump} describes the stochastic jump behavior of the system under the resulting time varying data rate of $R_k$.

\subsection{Controlled External Environments $\&$ SD-MC Models} 

The external environments in industrial settings which consist of moving vehicles or machines are modeled by a MDP $\mathcal{M}_{env}=\{S, s_{0}, A, Q\}$ where $S=\{s_{i}\}_{i=1}^{M_s}$ is a finite set of environment states, $s_0$ is an initial state, $A=\{a_{i}\}_{i=1}^{M_a}$ is a finite set of actions, and $Q=\{q(s|s', a)\}_{s, s' \in S, a \in A}$ is a transition matrix. 
Considering as an example a forklift vehicle which operates in an industrial factory, then the state set $S$ in the MDP represents a group of partitions of its operating regions on the factory floor. 
By taking a particular action $a \in A$, the forklift thus moves from one region $s'$ to another $s$ following a transition probabilities $q(s|s', a)$. 
Under the considered environment model, the quality of the wireless communication link is affected by the state/region where the forklift vehicle is located. 
The quality of the communication link is in particular measured by a time varying data rate that is selected from a finite set $\mathcal{R}=\{r_{1}, r_{2}, \ldots , r_{M_{R}}\}$. 

Let the sequence $\mathcal{I}=\{t_{k}\}_{k=0}^{\infty}$ denotes the transmission time instants while the sequence of random variables $R_{k}$ denote the data rates at time instant $t_k$.
 Then,  the sequence $\{R_{k}\}_{k=0}^{\infty}$ over these transmission time instants form a random process which characterizes the stochastic variations on the channel conditions. 
 At each time instant $t_k$, the communication system can adjust its transmission power level for sending data through a wireless communication channel. 
 Let $\Omega_{p}=\{1, 2, \ldots, M_p \}$ denotes a finite set of transmission power levels wherein each $i \in \Omega_{p}$ represents the power level $i$. 
 The transmission power set is sorted in an ascending order such that larger numbers represent higher power levels. 
 Let $p_{k}:=p(t_{k}) \in \Omega_{p}$ denotes the power level that is selected at time instant $t_k$. 
 With the assumed MDP model for the external environment, we define the SD-MC model below.
 
\begin{definition}
	\label{def:ss-MC}
	Given a power set $\Omega_{p}$, an external environment modeled by a MDP $\mathcal{M}_{env}$, and a finite set of data rates $\mathcal{R}=\{r_{1}, r_{2}, \ldots , r_{M_{R}}\}$ that are arranged ascendingly, i.e., $r_{i} < r_{j}, \forall i < j$.
	Then a  wireless communication channel is said to be a SD-MC if $\forall s \in S, p\in \Omega_{p}$, and $\forall r_{i}, r_{j} \in \mathcal{R}$:
	\begin{align}
		\label{sys:ss-MC}
		\mathbb{P}\{R_{k+1}=r_i | R_{k}=r_{j}, s_{k}=s, p_{k}=p\}=P_{ij}(s, p)
	\end{align}
	where $P_{ij}(s, p)$ is a transition probability from data rate $r_{j}$ to $r_{i}$ given the transmission power $p$ and environment state $s$. 
\end{definition} 

The SD-MC model in \eqref{sys:ss-MC} can be viewed as a generalization of more conventional Markov channel model which ignores the impact of environment state and transmission power \cite{minero2012stabilization,zhang1999finite}. 
Example \ref{example-ss-MC} below illustrates how the SD-MC model in \eqref{sys:ss-MC} can be derived using classical estimation methods in \cite{zhang1999finite,goldsmith1997variable} which were often used to generate the conventional Markov channel model.
 From a technical standpoint, Example \ref{example-ss-MC} thus demonstrates that the SD-MC  model proposed in this paper generalizes the conventional Markov channel model.
\begin{example}
	\label{example-ss-MC}
	The characterization of the SD-MC model in \eqref{sys:ss-MC} is an abstraction of the temporal channel state variations under different transmission power and external environment dynamics. 
	The received signal-to-noise ratio (SNR) is a function of the transmission power and the environment states. 
	To explicitly show the impact of such state-dependent features on the received SNR, suppose that the (small scale)~channel gain follows a Raleigh distribution with $Rayleigh(1)$\footnote{The example selects unit parameter $Rayleigh(1)$ for the sake of notation simplicity and the channel model proposed in \eqref{sys:ss-MC} can be applied to other parameters and distributions.}.
	Let $\varphi: S \rightarrow [0, 1]$ denotes a function which characterizes the impact of different environment states $s \in S$ on the shadow fading effect.
	Then, the received instantaneous SNR $\gamma$ under the transmission power $p \in \Omega_{p}$ and the environment state $s \in S$ is a random variable that is exponentially distributed with a probability density function of the form 
	\begin{align}
		f_{d}(\gamma)=\frac{1}{\overline{\gamma}}\exp\left(-\frac{\gamma}{\overline{\gamma}}\right), \gamma \geq 0
	\end{align}
	where $\overline{\gamma}=\frac{p\varphi(s)}{N_{0}B}$ is the average SNR with noise density of $N_0$ and received signal bandwidth of $B$. 
	The states of $M_{R}$ data rates in the Markov channel are obtained by partitioning the received SNR into $M_{R}$ disjoint intervals $\cup_{i=1}^{M_{R}} [\Gamma_i, \Gamma_{i+1})$ with $\Gamma_{i}$ and $\Gamma_{i+1}$ being the lower and higher thresholds, respectively, of the $i$-th interval. 
	For each interval, there is one selected data rate $r \in \mathcal{R}$ to ensure a sufficiently small bit error rate \cite{zhang1999finite}. 
	Thus, we say that the data rate is $r_i$ if the received SNR values are located in the corresponding interval. 
	Based on the results in \cite{zhang1999finite}, the corresponding transition probability may then be approximated as follows.
	\begin{equation}
	\label{exp-tp}
	\begin{aligned}
		P_{i, i+1}&=\frac{N_{i+1}T_p}{\pi_{i}}, P_{i,i-1}=\frac{N_{i}T_p}{\pi_{i}},\\
		 P_{ii}&=1-P_{i, i+1}-P_{i,i-1}
	\end{aligned}
	\end{equation}
	where\footnote{The parameter $f_{D}$ is the maximum Doppler frequency and $T_p$ is the one packet time period.} $N_{i}=\sqrt{\frac{2\pi \Gamma_{i}}{\overline{\gamma}}}f_{D}\exp\left(-\frac{\Gamma_{i}}{\overline{\gamma}} \right)$ and $\pi_{i}={\rm Pr}\{\Gamma_{i} \leq \gamma \leq \Gamma_{i+1}\}=\exp\left(-\frac{\Gamma_i}{\overline{\gamma}}\right)-\exp\left(-\frac{\Gamma_{i+1}}{\overline{\gamma}}\right)$. Since $\overline{\gamma}=\frac{p\varphi(s)}{N_{0}B}$, the transition probabilities defined in \eqref{exp-tp} change as a function of the transmission power $p$ and the environment states $s$. 
\end{example}


\section{Problem Formulation}
\label{sec:problem-formulation}
For the closed-loop NCS modeled in \eqref{sys:continuous}, \eqref{sys:jump} and \eqref{sys:ss-MC}, one of the problems considered in this paper is that of ensuring the system's  stochastic stability property. 
 Definition \ref{def: ss} formally states the \emph{stochastic stability} notions used in this paper.
\begin{definition}[Stochastic Stability \cite{khasminskii2011stochastic}]
	\label{def: ss}
	Given the SD-MC model in \eqref{sys:ss-MC} and the closed-loop NCS model in \eqref{sys:continuous}-\eqref{sys:jump}.
	\begin{itemize}
		\item[\textbf{E}] The system is said to be \emph{asymptotically stable in expectation} (ASE) if there exist a class $\mathcal{KL}$ function $\beta(\cdot, \cdot)$ and a bounded set $\Omega_{r}=\{ x \in \mathbb{R}^{n} : |x| <  r \}$ such that for all $x_{0} \in \Omega_{r}$, it holds that:
		\begin{align}
			\label{ineq:ASE}
			\mathbb{E}(|x(t)|) \leq \beta(|x_0|, t-t_0),   \quad \forall t \in \mathbb{R}_{\geq 0}
		\end{align}
		and that $\lim_{t \rightarrow \infty}\mathbb{E}(|x(t)|)=0$.
		\item[\textbf{P}] The system is said to be \emph{almost surely asymptotically stable} (ASAS) if $\forall \epsilon, t' > 0$, the following holds:
		\begin{align}
			\label{ineq:ASAS}
			\mathbb{P}\{\lim_{t' \rightarrow \infty}\sup_{t \geq t'}|x(t)| \geq \epsilon \} = 0
		\end{align}
	\end{itemize}
\end{definition}
\begin{remark}
	The ASAS is a stronger stability notion than that of the ASE in the sense that (i) the former implies the latter while (ii) the latter normally does not lead to the former. 
	As a matter of fact, one may show that a system is ASAS if it is \emph{exponentially stable in expectation}, i.e. there exists an exponential function $\beta(|x_0|, t-t_0):=c_{1}\exp\big(-c_{2}(t-t_{0})\big)|x_0|$ with $c_{i} > 0,  i=1, 2$ such that inequality \eqref{ineq:ASE} holds \cite{hu2019co, hu2019optimal}. 
\end{remark}
\begin{problem}[Stability Problem]
	\label{problem-stability}
	The first problem considered in this paper is the stability problem.
	Specifically, the problem is to characterize the MATI under which the considered NCS with the proposed SD-MC model satisfies the stochastic stability  notions given in Definition \ref{def: ss}.
\end{problem}
\begin{problem}[Co-design Problem]
	\label{problem-efficiency}
Under the stability conditions obtained by solving Problem \ref{problem-stability}, the second problem is to find optimal control and transmission power policies under which some pre-defined system costs are optimized. 
Specifically, let $\mu_{m}$ and $\mu_{p}$ denote the control policy and transmission power policy respectively.
Then, for a given joint cost function $\{c(s, p, r)\}_{s \in S, p \in \Omega_{p}, r \in \mathcal{R}}$ the co-design problem is defined as that of finding an optimal policy $\mu^{\star}:=(\mu^{\star}_{m}, \mu^{\star}_{p})$ such that the average expected costs in \eqref{opt: game-1} below are minimized under the obtained stability conditions.
	\begin{equation}
		\label{opt: game-1}
		\begin{aligned}
			& \underset{\mu_{m}, \mu_{p}}{\text{min}}
			& & \lim_{\ell \rightarrow \infty}\frac{1}{\ell} \mathbb{E}_{s_{0}, R_{0}}^{\mu_{m}, \mu_{p}} \sum_{k=0}^{\ell} c(s_{k}, p_{k}, R_{k})\\
			& \text{s.t.}
			& & \text{Stability conditions ensuring } \eqref{ineq:ASAS}
		\end{aligned}
	\end{equation}
	where $s_{0}$ and $R_{0}$ denote the initial states of the MDP system and the Markov channel, respectively. 
\end{problem}

\section{Main Results}
\label{sec:main-results}
This section presents the main results of the paper. 
Firstly, sufficient conditions on MATI which ensure the ASE~(\textbf{E} in Definition \ref{def: ss})) and ASAS~(\textbf{P} in Definition \ref{def: ss})) properties to hold are derived. 
Subsequently, this section then shows that the co-design problem can be solved by constrained optimizations with the obtained stability conditions as constraints.

Assumption \ref{assumption} below is used to derive the main results. 
\begin{assumption}[\cite{nesic2009explicit,carnevale2007lyapunov}]
	\label{assumption}
Consider the stochastic hybrid system in \eqref{sys:continuous} and \eqref{sys:jump}. 
Let $\overline{e}:=[e; \xi]$ be an augmented vector of the error states $e$ and the size of the dynamic quantizer $\xi$. 
Suppose there exist:
\begin{itemize}
\item a function $W : \mathbb{N}_{\geq 0} \times \mathbb{R}^{n_{e}+1} \rightarrow \mathbb{R}_{\geq 0}$ that is locally Lipschitz with respect to $\overline{e}$
\item a function $V: \mathbb{R}^{n_x} \rightarrow \mathbb{R}_{\geq 0}$ that is  locally Lipschitz, positive definite, and radially unbounded
\item a continuous function $H: \mathbb{R}^{n_x} \rightarrow \mathbb{R}_{\geq 0}$
\end{itemize}
Assume further that there exist a finite set of constants $\{\lambda_{i}\}_{i=1}^{M_{R}}$, some real numbers $L \geq 0, \zeta > 0$, class $ \mathcal{K}_{\infty}$ functions $( \underline{\alpha}_{W}, \overline{\alpha}_{W}, \underline{\alpha}_{V}, \overline{\alpha}_{V}) \in \mathcal{K}_{\infty}$, and a continuous positive definition function $\varrho$ such that:
\begin{enumerate}
	\item $\forall k \in \mathbb{N}$, $\overline{e} \in \mathbb{R}^{n_{x}+1}$ and $r_{i} \in \mathcal{R}=\{r_{1}, \ldots, r_{M_R}\}$:
	\begin{subequations}
		\label{ineq:W}
		\begin{align}
		\underline{\alpha}_{W}(|\overline{e}|) \leq W(k, \overline{e}) &\leq \overline{\alpha}_{W}(|\overline{e}|) \\
		W(k+1, \overline{J}(k, \overline{e}, r_i)) &\leq \lambda_{i} W(k, \overline{e})
		\end{align}
	\end{subequations}
where $\overline{J}(k, \overline{e}, r_i)=[J_{e};J_{\xi}]$ with functions $J_{e}$ and $J_{\xi}$ are defined as in \eqref{sys:jump}.
\item $\forall k \in \mathbb{N}, x \in \mathbb{R}^{n_x}$ and for almost all $\overline{e} \in \mathbb{R}^{n_{x}+1}$,  then 
\begin{align}
\label{ineq:W-continuous}
\left\langle \frac{\partial W(k, \overline{e})}{\partial \overline{e}}, \overline{g}(x, \overline{e})  \right\rangle \leq L W(k, \overline{e})+ H(x)
\end{align}
where $\overline{g}(x, \overline{e})=[g_{e};g_{\xi}]$ with functions $g_{e}$ and $g_{\xi}$ are defined as in \eqref{sys:continuous}.
	\item $\forall x \in \mathbb{R}^{n_x}$ 
	\begin{align}
	\label{ineq: alpha_V}
	\underline{\alpha}_{V}(x) \leq V(x) \leq \overline{\alpha}_{V}(x)
	\end{align}
	and $\forall k \in \mathbb{N}, \overline{e} \in \mathbb{R}^{n_{x}+1}$, and for almost all $x \in \mathbb{R}^{n_x}$
	\begin{align}
	\label{ineq:delta-V}
	\langle \nabla V(x),  \tilde{f}(x, e) \rangle \leq \varrho(|x|)-&\varrho(W(k,\overline{e}))-H^{2}(x) \nonumber  \\
	+& \zeta^{2}W^{2}(k, \overline{e})
	\end{align}
\end{enumerate}
\end{assumption}  
\begin{remark}
Assumption \ref{assumption} is essentially similar to \cite[Assumption 1]{carnevale2007lyapunov} where the inequalities \eqref{ineq:W} of  part 1) are used to characterize the bounds on the function of error states $\overline{e}$ as well as its growths for different data rates at discrete time instants. 
It is assumed that, for any given data rate $r_{i} \in \mathcal{R}$, there exists a corresponding positive real $\lambda_{i}$ that bounds the growth of the error function from the above. The inequality \eqref{ineq:W-continuous} of part 2) assumes a linear growth of the error function in the continuous time domain. The inequalities \eqref{ineq: alpha_V} and \eqref{ineq:delta-V} of part 3) are used to characterize the growth rate of the Lyapunov function with respect to the state $x$ in the continuous time domain. The MATI bounds that ensure stochastic stability will be derived based on the parameters given in this assumption.
\end{remark}
\subsection{Sufficient Conditions for Stochastic Stability}
\label{sec:almost-sure-stability}
Under Assumption \ref{assumption}, this subsection derives sufficient conditions on the MATI that will ensure the stability of the stochastic hybrid system in \eqref{sys:continuous}-\eqref{sys:jump} under the proposed SD-MC model.
Specifically, sufficient conditions which ensure the ASE and ASAS properties to hold are given in Theorem \ref{thm:ase} and Theorem \ref{thm:asas}, respectively. 
\begin{theorem}
	\label{thm:ase}
Consider the stochastic hybrid system in \eqref{sys:continuous}-\eqref{sys:jump}, the SD-MC model in \eqref{sys:ss-MC}, and the controlled external environment  $\mathcal{M}_{env}$. Suppose Assumption \ref{assumption} holds.
Let $T_{MATI}$ denotes the maximum allowable transmission time interval.
For a given joint policy $\mu=(\mu_{m}, \mu_{p})$, the system is ASE if 
\begin{align}
T_{MATI} \leq \begin{cases}
\frac{1}{L\eta}\arctan \left(\frac{\eta(1-\overline{\lambda})}{2\frac{\overline{\lambda}}{1+\overline{\lambda}}(\frac{\zeta}{L}-1)+1+\overline{\lambda}} \right) \quad & \zeta > L \\
\frac{1}{L}\frac{1-\overline{\lambda}}{1+\overline{\lambda}} \quad & \zeta=L \\
\frac{1}{L\eta}\arctanh \left(\frac{\eta(1-\overline{\lambda})}{2\frac{\overline{\lambda}}{1+\overline{\lambda}}(\frac{\zeta}{L}-1)+1+\overline{\lambda}} \right) \quad &\zeta < L
\end{cases}
\label{ineq:MATI}
\end{align}
with $\eta=\sqrt{\left|(\frac{\zeta}{L})^{2} -1 \right|}$ and $\overline{\lambda}$ is a constant which satisfies
\begin{align}
\label{ineq:condition-lambda}
\overline{\lambda} > \sqrt{\|\diag(\lambda_{i}^{2})\overline{P}(\mu)\|}
\end{align}
where $\overline{P}(\mu)=\left[\overline{P}_{ij}(\mu)\right]_{1 \leq i, j \leq M_{R}}$ with
$\overline{P}_{ij}(\mu)
=\sum_{p \in \Omega_{p}, s \in S} \mathbb{P}(r_{i} \vert r_{j}, s, p)\mathbb{P}( s, p \vert r_{j}).$
\end{theorem}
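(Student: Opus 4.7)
The plan is a hybrid stochastic Lyapunov analysis in the spirit of Carnevale--Teel--Nesic~\cite{carnevale2007lyapunov}, where the deterministic jump gain appearing in the classical MATI proof is replaced, in expectation, by the mean growth factor induced by the SD--MC transition law~\eqref{sys:ss-MC}. Introduce the timer $\tau(t)\in[0,T_{MATI}]$ that is reset to $0$ at every transmission instant $t_k$, and the candidate storage function
\begin{equation*}
U(x,\overline{e},\tau,k)=V(x)+\zeta\,\phi(\tau)\,W^{2}(k,\overline{e}),
\end{equation*}
where $\phi:[0,T_{MATI}]\to\mathbb{R}_{>0}$ is the strictly decreasing solution of the Riccati ODE $\dot\phi=-2L\phi-\zeta(\phi^{2}+1)$ with $\phi(0)=1/\overline{\lambda}$. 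Integrating this ODE in closed form and distinguishing the three regimes $\zeta>L$, $\zeta=L$, $\zeta<L$ produces the $\arctan$/rational/$\arctanh$ expressions appearing in~\eqref{ineq:MATI}; $T_{MATI}$ is defined as the time at which the flow reaches $\phi(T_{MATI})=\overline{\lambda}$.

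During each inter-transmission interval $(t_k,t_{k+1})$, the chain rule together with~\eqref{ineq:W-continuous} gives $\tfrac{d}{dt}W^{2}\le 2LW^{2}+2WH(x)$; combining with~\eqref{ineq:delta-V} and Young's inequality $2\zeta\phi WH\le H^{2}+\zeta^{2}\phi^{2}W^{2}$ yields
\begin{equation*}
\dot U\le -\varrho(|x|)-\varrho(W)+\bigl[\zeta\dot\phi+2L\zeta\phi+\zeta^{2}\phi^{2}+\zeta^{2}\bigr]W^{2},
\end{equation*}
whose bracketed coefficient vanishes identically by the choice of $\phi$. At a transmission time $t_k$ the timer resets while~\eqref{ineq:W}(b) implies
\begin{equation*}
U(t_k^{+})-U(t_k^{-})\le \zeta\bigl[\phi(0)\lambda_{R_k}^{2}-\phi(\tau(t_k^{-}))\bigr]W^{2}(k,\overline{e}(t_k)).
\end{equation*}
Conditioning on $R_{k-1}$ and applying the SD--MC law~\eqref{sys:ss-MC} together with the joint distribution of $(s_{k-1},p_{k-1})$ induced by $\mu$ gives
\begin{equation*}
\mathbb{E}\bigl[\lambda_{R_k}^{2}\mid R_{k-1}\bigr]=\sum_{i}\lambda_{i}^{2}\,\overline{P}_{i,R_{k-1}}(\mu)\le \bigl\|\diag(\lambda_{i}^{2})\overline{P}(\mu)\bigr\|<\overline{\lambda}^{2}
\end{equation*}
by~\eqref{ineq:condition-lambda}. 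Since $\phi(0)=1/\overline{\lambda}$ and $\phi(\tau(t_k^{-}))\ge\phi(T_{MATI})=\overline{\lambda}$, the conditional expectation of the jump in $U$ is non-positive.

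Concatenating the flow and jump estimates shows that $\{U(t)\}_{t\ge t_0}$ is a non-negative supermartingale whose infinitesimal drift is dominated by $-\varrho(|x|)-\varrho(W)$ almost everywhere. Dynkin's formula combined with the class-$\mathcal{K}_{\infty}$ bounds on $V$ and $W$ from Assumption~\ref{assumption} then produces a class-$\mathcal{KL}$ function $\beta$ satisfying~\eqref{ineq:ASE}, establishing ASE. The principal obstacle is the jump step: one must carefully identify the conditional law of $(s_{k-1},p_{k-1})$ given $R_{k-1}$ so that the averaging reproduces precisely the matrix $\overline{P}(\mu)$ in the norm used in~\eqref{ineq:condition-lambda}, and handle the fact that this reduced kernel depends on $\mu$ through the long-run interaction between the SD--MC and the MDP. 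A secondary difficulty is verifying that the Riccati flow issued from $\phi(0)=1/\overline{\lambda}$ reaches $\overline{\lambda}$ on $[0,T_{MATI}]$ in the subcritical regime $\zeta<L$, where the ODE possesses fixed points that could obstruct the descent and thereby constrain how small $\overline{\lambda}$ is allowed to be.
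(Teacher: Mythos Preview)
Your continuous-flow analysis (the Riccati trajectory $\phi$ issued from $\phi(0)=\overline{\lambda}^{-1}$, the completion-of-squares step, and the identification of $T_{MATI}$ as the first time $\phi$ hits $\overline{\lambda}$) coincides with the paper's and with~\cite{carnevale2007lyapunov}. The gap is at the jump step, and it is a norm mismatch, not merely a conditioning technicality.

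Your scalar supermartingale argument computes
\[
\mathbb{E}\bigl[\lambda_{R_k}^{2}\,\big|\,R_{k-1}=r_j\bigr]=\sum_{i=1}^{M_R}\lambda_i^{2}\,\overline{P}_{ij}(\mu),
\]
which is the $j$-th \emph{column} sum of $\diag(\lambda_i^{2})\overline{P}(\mu)$. The hypothesis~\eqref{ineq:condition-lambda}, however, bounds the induced $\ell_\infty$ norm, i.e.\ the maximum \emph{row} sum $\max_i \lambda_i^{2}\sum_j\overline{P}_{ij}(\mu)$ (cf.\ the equivalence displayed in the proof of Theorem~\ref{thm:qcqp}). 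These two quantities are not comparable: for instance with $M_R=2$, $\lambda_1=1$, $\lambda_2=1$ and $\overline{P}=\bigl(\begin{smallmatrix}1&0\\1&0\end{smallmatrix}\bigr)$ the row-sum norm equals $1$ while the first column sum equals $2$, so $\overline{\lambda}^{2}>1$ does not force your conditional expectation below $\overline{\lambda}^{2}$. Hence the inequality $\sum_i\lambda_i^{2}\overline{P}_{i,R_{k-1}}\le\|\diag(\lambda_i^{2})\overline{P}(\mu)\|$ in your jump step is false in general, and the scalar supermartingale property cannot be concluded from~\eqref{ineq:condition-lambda}. (Your approach \emph{would} prove the statement under the stronger $1$-norm condition~\eqref{ineq:norm-1}, which is exactly the content of Theorem~\ref{thm:lpr}.)

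The paper circumvents this by working with a \emph{vector} of expectations partitioned by the current rate: set $\overline{U}_{k+1}=\bigl(\mathbb{E}[U_{k+1}\mathbbm{1}_{\{R_{k+1}=r_i\}}]\bigr)_{i=1}^{M_R}$ and similarly $\overline{W^2}_k=\bigl(\mathbb{E}[W^{2}(k,\overline{e})\mathbbm{1}_{\{R_k=r_j\}}]\bigr)_j$. The jump estimate together with the SD--MC law then yields the matrix inequality
\[
\overline{U}_{k+1}\le V(x(t_{k+1}))\mathbf{e}_{M_R}+\zeta\overline{\lambda}^{-1}\,\diag(\lambda_i^{2})\overline{P}(\mu)\,\overline{W^2}_k,
\]
and taking the $\ell_\infty$ norm on both sides produces exactly the operator norm in~\eqref{ineq:condition-lambda}, giving $|\overline{U}_{k+1}|\le|\overline{U}_k|$. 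The partition by $r_i$ is what converts the averaging over the \emph{next} rate into a left action of the matrix on a vector indexed by the \emph{previous} rate, so that the row-sum norm---not the column-sum norm---controls the growth.
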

\begin{proof}
	The proof is given in Appendix \ref{appendix}.
\end{proof}
\begin{remark}
The MATI bounds shown in \eqref{ineq:MATI} are functions of  parameters $\xi$ and $L$ defined in Assumption \ref{assumption}, and $\overline{\lambda}$ that depends on the parameters of the SD-MC model in \eqref{sys:ss-MC}.  
The proposed MATI bounds differ from the existing results in \cite{nesic2009explicit, carnevale2007lyapunov, hu2019co, hu2019optimal} in two aspects. 
First, the MATI bounds in \eqref{ineq:MATI} generalizes the results in \cite{nesic2009explicit, carnevale2007lyapunov} as they take into account the impacts that the stochastic communication channel has on the MATI. 
Such an impact is captured by the choice of parameter $\overline{\lambda}$ that must satisfies inequality \eqref{ineq:condition-lambda}. 
Existing results may thus be recovered from our MATI bounds by setting the parameter $\overline{\lambda}$ to be independent of the channel conditions. 
Second, the MATI results in this paper extend our prior works in \cite{hu2019co, hu2019optimal} by considering a less conservative assumption on the system structure and a more general SD-MC model. 
\end{remark}

\begin{theorem}
	\label{thm:asas}
Suppose conditions \eqref{ineq:MATI} and \eqref{ineq:condition-lambda} hold for the stochastic hybrid system in \eqref{sys:continuous}-\eqref{sys:jump}. 
Assume there exist positive constants $\underline{\alpha}_{W}, \overline{\alpha}_{W}, \underline{\alpha}_{V}, \overline{\alpha}_{V}$, and $\varrho > 0$ such that the conditions in  \eqref{ineq:W}, \eqref{ineq: alpha_V} and \eqref{ineq:delta-V}  in Assumption \ref{assumption} hold with 
\begin{subequations}
	\begin{align}
		\underline{\alpha}_{W} |\overline{e}| &\leq W(k, \overline{e}) \leq \overline{\alpha}_{W} |\overline{e}| \\
			\underline{\alpha}_{V} |x|^2 &\leq V(x) \leq \overline{\alpha}_{V} |x|^2 \\
		\langle \nabla V(x),  \tilde{f}(x, e) \rangle &\leq \varrho |x|^{2}-\varrho W(k,\overline{e}) \nonumber \\ 
		 & \quad -H^{2}(x)+\zeta^{2}W^{2}(k, \overline{e}). 
	\end{align}
	\label{ineq:new}
\end{subequations}
Then the stochastic hybrid system in \eqref{sys:continuous}-\eqref{sys:jump} is ASAS.
\end{theorem}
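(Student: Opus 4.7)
The plan is to strengthen the argument of Theorem \ref{thm:ase} under the linear/quadratic bounds \eqref{ineq:new} so as to obtain \emph{exponential} stability in expectation, and then invoke the implication noted in the remark following Definition \ref{def: ss} (citing \cite{hu2019co, hu2019optimal}) that exponential stability in expectation yields almost sure asymptotic stability.

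First, I would reuse the composite Lyapunov function from the proof of Theorem \ref{thm:ase} in the appendix, which has the form $U(x,k,\overline{e},\tau) = V(x) + \gamma\,\phi(\tau)\,W^{2}(k,\overline{e})$, where $\tau \in [0, T_{MATI}]$ is the time elapsed since the last transmission and $\phi$ is the scalar auxiliary function whose ODE is exactly what produces the $\arctan$/$\arctanh$ bounds in \eqref{ineq:MATI}. Under the linear bound on $W$ and the quadratic bound on $V$ in \eqref{ineq:new}, one obtains a two-sided estimate $c_{-}(|x|^{2}+|\overline{e}|^{2}) \leq U \leq c_{+}(|x|^{2}+|\overline{e}|^{2})$ for positive constants $c_{\pm}$, so $U$ is a genuine quadratic Lyapunov function for the full hybrid state.

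Second, I would extract an explicit exponential decay rate for $\mathbb{E}[U]$. On each flow interval $(t_{k}, t_{k+1})$, inequality \eqref{ineq:new}c combined with the ODE for $\phi$ (the standard MATI cascade computation) yields $\dot{U} \leq -\varrho |x|^{2} \leq -(\varrho/\overline{\alpha}_{V})V \leq -c_{cont}\, U$, where the last step uses the quadratic two-sided bound on $V$ and the absorption of the $-H^{2}+\zeta^{2}W^{2}$ term into $\gamma\phi W^{2}$. At each transmission instant, conditioning on the past and applying the jump rule \eqref{ineq:W}b together with the SD-MC law \eqref{sys:ss-MC} and \eqref{ineq:condition-lambda} yields $\mathbb{E}\bigl[U(t_{k}^{+}) \bigm| \text{past}\bigr] \leq \mu\, U(t_{k})$ for some $\mu > 0$; condition \eqref{ineq:MATI} is precisely what guarantees that the product of $\mu$ with the continuous decay factor over a full cycle is strictly less than $1$. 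Iterating over cycles then gives $\mathbb{E}[U(t)] \leq C e^{-\alpha(t-t_{0})} U(t_{0})$ for some $\alpha, C > 0$. By \eqref{ineq:new}b and Jensen's inequality,
\begin{align*}
\underline{\alpha}_{V}\bigl(\mathbb{E}|x(t)|\bigr)^{2} \leq \underline{\alpha}_{V}\,\mathbb{E}|x(t)|^{2} \leq \mathbb{E}V(x(t)) \leq \mathbb{E}U(t) \leq C e^{-\alpha(t-t_{0})}U(t_{0}),
\end{align*}
so $\mathbb{E}|x(t)| \leq c_{1}e^{-c_{2}(t-t_{0})}|x_{0}|$, i.e.\ exponential stability in expectation. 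The cited remark then concludes ASAS via the standard Borel--Cantelli argument applied along the sampling grid, $\mathbb{P}(|x(t_{k})|\geq\epsilon) \leq \mathbb{E}|x(t_{k})|/\epsilon \leq c_{1}e^{-c_{2}(t_{k}-t_{0})}|x_{0}|/\epsilon$, whose summability forces $\{|x(t_{k})|\geq\epsilon\}$ to occur only finitely often a.s., while \eqref{ineq:W-continuous} propagates the sampled bound uniformly to intermediate times.

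The main obstacle will be the extraction of a \emph{strict} exponential rate $\alpha > 0$: Theorem \ref{thm:ase} is stated with class-$\mathcal{K}_{\infty}$ bounds that a priori only supply asymptotic decay, and one must verify that the auxiliary function $\phi$ and constant $\gamma$ picked in the appendix produce not merely a nonincrease of $\mathbb{E}[U]$ per cycle but genuine geometric contraction. The companion subtlety is the jump step: \eqref{ineq:condition-lambda} controls only the square root of the spectral radius of $\diag(\lambda_{i}^{2})\overline{P}(\mu)$, so converting this spectral bound into a per-step contraction of $\mathbb{E}[U]$ requires a careful left-eigenvector/stationary argument on the induced Markov channel. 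Once these two points are pinned down, the remaining computations are mechanical.
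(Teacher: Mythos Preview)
Your overall strategy---prove exponential stability in expectation, then invoke Borel--Cantelli---matches the paper's, as does the use of the composite Lyapunov function $U = V + \zeta\phi(\tau) W^{2}$. The difference, and the gap in your argument, is in \emph{where} the exponential rate comes from.

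Your flow-interval chain $\dot{U}\le -\varrho|x|^{2}\le -(\varrho/\overline{\alpha}_{V})V\le -c_{\mathrm{cont}}U$ breaks at the last inequality: since $U = V + \zeta\phi W^{2}\ge V$, you only get $-(\varrho/\overline{\alpha}_{V})V \ge -(\varrho/\overline{\alpha}_{V})U$, which is the wrong direction. The term $-\varrho|x|^{2}$ alone cannot dominate $U$, because $U$ contains the error component $\zeta\phi W^{2}(k,\overline{e})$ that does not see $x$ at all. This forces you to look for strict contraction at the jumps instead---precisely the obstacle you flag in your last paragraph and do not resolve; the proof of Theorem \ref{thm:ase} only delivers $|\overline{U}_{k+1}|\le|\overline{U}_{k}|$, i.e.\ non-increase, not a strict factor $\mu<1$.

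The paper avoids this entirely by not discarding the second dissipation term. The Carnevale--Teel cascade computation with \eqref{ineq:new} actually yields $\langle\nabla U,F\rangle\le -\varrho|x|^{2}-\varrho\, W(k,\overline{e})$, and that second term---available exactly because \eqref{ineq:new} specialises the class-$\mathcal{K}_{\infty}$ function $\varrho(\cdot)$ to a constant multiple---supplies the missing negative piece in $|\overline{e}|$. Combining both terms with the two-sided quadratic bound on $U$ gives $\langle\nabla U,F\rangle\le -\tilde{\varrho}\,|\overline{x}|^{2}\le -(\tilde{\varrho}/\overline{\alpha}_{U})\,U$, so the exponential decay comes \emph{entirely from the flow intervals}; at jumps only the non-increase already established in Theorem \ref{thm:ase} is needed. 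In short: do not drop $-\varrho W$ after the cascade step---it is the whole point of the strengthened hypothesis \eqref{ineq:new}, and it is what makes the spectral/eigenvector issue you worry about disappear.
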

\begin{proof}
	The proof is included in Appendix \ref{appendix}.
\end{proof}

\subsection{Optimal Co-design of Control \& Power Policies: A Constrained Optimization Problem}
\label{sec:noncooperative-game}
This section formulates the co-design problem of control and power policies as a constrained optimization problem with the stability condition in \eqref{ineq:condition-lambda} as its constraints. 
The challenge of solving the formulated optimization problems lies in the difficulty of dealing with the stability constraint. 
In particular, the stability condition derived in \eqref{ineq:condition-lambda} is equivalent to polynomial constraints in the optimization problem. 

%
Theorem \ref{thm:qcqp} below shows that, if stationary policies are considered, the co-design Problem \ref{problem-efficiency} can be reformulated as a polynomial constrained program with a linear objective. 
\begin{theorem}
	\label{thm:qcqp}
Let the sets of MDP states $S$, transmission power $\Omega_{p}$, and data rate $\mathcal{R}$ be given.
For all $i$ with $ 1\leq i \leq M_R$, let $X(s, r, p) \geq 0, \forall s, r, p$ denotes the decision variables of the following constrained polynomial optimization problem (CPOP).
	\begin{subequations}
\label{opt: qcqp}
\begin{alignat}{5}
&\!\min_{\{X(s, r, p)\}} & & \sum_{s \in S, p \in \Omega_{p}, r \in \mathcal{R}} c(r,s,p)X(r,s,p) \label{opt: obj}\\
& \text{s.t.}
& & \sum_{s, p} X(s, r_i, p) \nonumber \\
	& & & \;\;-\sum_{p, s, r_j} P_{ij}(s, p)X(r_j, s, p)=0, \; \forall r_i \in \mathcal{R} \label{opt: mdp2} \\
& & &\sum_{s , p, r}X(s, r, p)=1,  \label{opt: occupation} \\
& & &\sum_{j=1}^{M_R}\sum_{s, p}P_{ij}(s, p)X(s, p)\prod_{\ell \neq j}X(r_{\ell}) \nonumber\\
& & &\quad-\theta_{i}^2\prod_{j=1}^{M_R}X(r_j) \leq 0 \label{opt: stability}
\end{alignat}
\end{subequations}
where $X(s,p)=\sum_{j=1}^{M_R}X(r, s, p)$, $X(r)=\sum_{s, p}X(r, s, p)$\footnote{$X(A)$ denote the occupation~(probability) measure \cite{altman1999constrained} that assigns a probability to the event $A$}, $\theta_{i}=\overline{\lambda}\big/ \lambda_{i}$, and $P_{ij}(s, p)$ is the transition probability of the SD-MC model in \eqref{sys:ss-MC}. 
Then the optimal stationary power policy $\mu_{p}^{\star}=\{\mathbb{P}(p | r)\}_{p \in \Omega_{p}, r \in \mathcal{R}}$ and optimal probability distribution  $\pi^{\star}=\{\mathbb{P}(s)\}_{s \in S}$ for the MDP states can be represented as
\begin{align}
\label{eq:power-policy}
\mathbb{P}(p | r)&=\frac{\sum_{s \in S}X^{*}(s, r, p)}{\sum_{p \in \Omega_{p}, s \in S}X^{*}(s, r, p)} \\
\mathbb{P}(s)&=\sum_{p \in \Omega_{p}, r \in \mathcal{R}}X^{*}(s, r, p)
\label{eq:policy-s}
\end{align}
with $\{ X^{*}(s, r, p)\}_{s \in S, r \in \mathcal{R}, p \in \Omega_{p}}$ is the solution of  CPOP \eqref{opt: qcqp}.  
\end{theorem}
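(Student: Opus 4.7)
The plan is to cast the long-run average-cost co-design of Problem \ref{problem-efficiency}, restricted to stationary policies, as an optimization over a joint occupation measure, and then convert the spectral stability condition of Theorem \ref{thm:ase} into the polynomial constraint \eqref{opt: stability}.

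Under any stationary pair $(\mu_m, \mu_p)$, the joint process $(s_k, R_k, p_k)$ is a finite-state, time-homogeneous Markov chain on $S \times \mathcal{R} \times \Omega_p$; let $X(s, r, p)$ denote its stationary joint distribution. By the ergodic theorem for finite Markov chains, the long-run average cost in \eqref{opt: game-1} equals $\sum_{s, r, p} c(r, s, p) X(s, r, p)$, which is exactly the linear objective \eqref{opt: obj}. The normalization \eqref{opt: occupation} is immediate, and the balance \eqref{opt: mdp2} is the Chapman--Kolmogorov stationarity relation for the marginal on $\mathcal{R}$: conditioning on the previous data rate $r_j$, the inflow into $r_i$ is $\sum_{j, s, p} P_{ij}(s, p) X(r_j, s, p)$, which must equal the outflow $\sum_{s, p} X(s, r_i, p)$. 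This places the formulation squarely within the occupation-measure framework for constrained MDPs of \cite{altman1999constrained}.

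The crux is to rewrite \eqref{ineq:condition-lambda} as the polynomial inequality \eqref{opt: stability}. Using the matrix-norm definition from the Notations section together with $\overline{P}_{ij}(\mu) = \sum_{s, p} P_{ij}(s, p)\, \mathbb{P}(s, p \mid r_j)$, the inequality $\overline{\lambda}^2 > \|\diag(\lambda_i^2)\overline{P}(\mu)\|$ reduces, for every $1 \leq i \leq M_R$, to
\begin{equation*}
	\lambda_i^2 \sum_{j=1}^{M_R}\sum_{s, p} P_{ij}(s, p)\, \mathbb{P}(s, p \mid r_j) \leq \overline{\lambda}^2.
\end{equation*}
Substituting $\mathbb{P}(s, p \mid r_j) = X(r_j, s, p)/X(r_j)$, setting $\theta_i = \overline{\lambda}/\lambda_i$, and multiplying both sides by the common denominator $\prod_{\ell=1}^{M_R} X(r_\ell)$ clears the rational form and produces \eqref{opt: stability}. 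Once feasibility is established, \eqref{eq:power-policy} and \eqref{eq:policy-s} are recovered as elementary Bayes-rule identities: $\mathbb{P}(p \mid r) = \mathbb{P}(r, p)/\mathbb{P}(r)$ and $\mathbb{P}(s) = \sum_{p, r} \mathbb{P}(s, r, p)$. Conversely, any nonnegative $X^*$ satisfying \eqref{opt: mdp2}--\eqref{opt: stability} defines a stationary joint distribution whose induced policy, constructed by \eqref{eq:power-policy}, is feasible for Problem \ref{problem-efficiency} and attains the same cost, closing the correspondence.

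The main technical obstacle is the clearing-of-denominators step: one must verify that multiplying the rational stability bound through by $\prod_\ell X(r_\ell)$ preserves logical equivalence even on the boundary where some $X(r_\ell)$ vanishes, and that the resulting multilinear expression matches the constraint \eqref{opt: stability} term by term (in particular, reconciling the notation $X(s, p) = \sum_{j} X(r_j, s, p)$ with the inner sum indexed by $j$). The remaining pieces---ergodic rewriting of the objective, balance-equation derivation, and Bayes-rule inversion of $X^*$ into a policy---are routine for finite-state constrained MDPs.
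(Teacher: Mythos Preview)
Your outline follows the paper's proof essentially step for step: occupation-measure linearization of the long-run average cost (invoking \cite{altman1999constrained}), the balance and normalization constraints \eqref{opt: mdp2}--\eqref{opt: occupation}, and the row-by-row expansion of the $\infty$-norm bound $\lambda_i^2\sum_j\overline{P}_{ij}\le\overline{\lambda}^2$ followed by clearing the denominators $\prod_\ell X(r_\ell)$. The one place you need to adjust is the conditional-probability substitution: the paper sets $\mathbb{P}(s,p\mid r_j)=X(s,p)/X(r_j)$ with $X(s,p)=\sum_r X(s,r,p)$ the \emph{marginal}, rather than your standard Bayes ratio $X(r_j,s,p)/X(r_j)$; it is precisely this choice that reproduces the term $X(s,p)\prod_{\ell\neq j}X(r_\ell)$ appearing in \eqref{opt: stability}, and it is exactly the reconciliation you already flagged as the remaining obstacle.
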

\begin{proof}
Please refer to the Appendix \ref{appendix} for the proof. 
\end{proof}
\begin{remark}
The inequality \eqref{opt: stability} is equivalent to the stability condition in \eqref{ineq:condition-lambda} if stationary policies are considered in the system. 
For all $i$ with  $1 \leq i \leq M_{R}$, let the following function $h_i$ be those that form the inequality in \eqref{opt: stability}
$$h_{i}(X):=\sum_{j=1}^{M_R}\sum_{s, p}P_{ij}(s, p)X(s, p)\prod_{\ell \neq j}X(r_{\ell})-\theta_{i}^2\prod_{j=1}^{M_R}X(r_j)$$
It is clear that functions $\{h_{i}\}_{1 \leq i \leq M_{R}}$ are polynomial functions whose order is determined by the number  $M_{R}$ of data rates specified in the channel. 
The solutions to the CPOP \eqref{opt: qcqp} thus represent the occupation measures of the data rates, MDP states, and transmission power. 
These measures are used to construct the optimal power policy that is conditioned on the data rate $r$. 
One can refine the power policy using the optimal occupation measures $\{X^{*}(s, r, p)\}$ by defining it to be conditional on both data rate $r$ and the environment state $s$. 
Specifically, $\mathbb{P}(p~|~r, s)=X^{*}(s, r, p)/\sum_{p \in \Omega_{p}, s \in S}X^{*}(s, r, p)$ where the power policy needs to use both data rate and MDP state to determine the probability of choosing certain power level. 
In cases where environment state $s$ may not be available to the communication system, then the power policy $\mathbb{P}(p~|~r)$ is practically more feasible to implement than $\mathbb{P}(p~|~r, s)$.
\end{remark}

Not that it is well known that it is in general computationally hard~(NP hard) to even decide whether a polynomial with degree equal to or greater than three is convex over a compact region \cite{ahmadi2019complexity}. 
Such a negative result on CPOP decidability suggests that the structure of the polynomial constraints as formed in \eqref{opt: qcqp} must be investigated to have efficient solutions. 
In Subsection \ref{subsec:qcp}, we show that the CPOP \eqref{opt: qcqp} can be reduced to a quadratic constraint program for the case of \emph{two-state state-dependent Markov channel}~(two-state SD-MC). 
The two-state SD-MC can be considered as a generalization of the well known bursty erasure channel \cite{minero2012stabilization}. 
Regarding the general case of SD-MC channels,  we propose a linear programming method in Subsection \ref{subsec:lpr} to approximate the solutions of the CPOP \eqref{opt: qcqp}, which therefore allows a computationally efficient way to solve the co-design problem in industrial NCS.  

\subsubsection{Two-state SD-MC: Quadratic Constraint Programs}
\label{subsec:qcp}
This section considers a two-state SD-MC with data rates of $0$ and $R$ as depicted in Fig.~\ref{fig: two-state-MC}.
Under this particular two-state SD-MC model, Lemma \ref{lemma:qcp} below shows that the polynomial constraints in \eqref{opt: stability} are reduced to quadratic constraints.
In this case, the CPOP \eqref{opt: qcqp} reduces to a quadratic optimization problem which can be solved by SDP programming methods if the matrices associated with the corresponding quadratic constraints are positive semidefinite. 

\begin{lemma}
	\label{lemma:qcp}
Consider a two-state SD-MC with the data rates of $r_1$ and $r_2$ in Fig.~\ref{fig: two-state-MC}.
Given the set of MDP states $S$ and transmission power set $\Omega_{p}$, the polynomial constraints in \eqref{opt: stability} are quadratic constraints which can be formulated as
\begin{align}
\label{ineq:qc-1}
X^{T}\overline{Q}_{1}X \leq 0 \\
X^{T}\overline{Q}_{2}X \leq 0 
\label{ineq:qc-2}
\end{align}
where $X=[X(r_1, s_1, p_1), X(r_1, s_1, p_2), \ldots, X(r_1, s_2, p_1), \ldots, \allowbreak X(r_2, s_{M_s}, p_{M_p})]^{T}$ is an augmented vector whose elements are arranged in the order of transmission power $p$, MDP state $s$, and data rate $r$. 
In addition, the matrices $\overline{Q}_{1}, \overline{Q}_{2} \in \mathbb{R}^{2M_{s}M_{p} \times 2M_{s}M_{p}}$ are of the forms
\begin{align*}
\overline{Q}_{1}&=A_{11}^{T}I_{1}+A_{12}I_{2}-\theta_{1}^{2}I_{2}^{T}I_{1} \\
\overline{Q}_{2}&=A_{21}^{T}I_{1}+A_{22}I_{2}-\theta_{2}^{2}I_{2}^{T}I_{1}
\end{align*}
where $A_{ij}=[\vec{P}_{ij}, \vec{P}_{ij}]_{1 \times 2M_{s}M_{p}}$ with $\vec{P}_{ij}=Vec(P_{ij}), \forall i, j \in \{1, 2\}$, $\theta_{i}=\overline{\lambda}\big/ \lambda_{i}$ with $i=1, 2$, $I_{1}=[\mathbf{0}, \mathbf{e}]$, and $I_{2}=[\mathbf{e}, \mathbf{0}]$, in which $\mathbf{0}$ and $\mathbf{e}$ are row vectors of $M_{s}$ and $M_{p}$ of zeros and ones, respectively. 
\end{lemma}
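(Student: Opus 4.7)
The proof plan is to specialize the polynomial stability constraint \eqref{opt: stability} to the case $M_R=2$ and then to recast each resulting inequality in matrix form. The derivation is essentially algebraic: nothing in \eqref{opt: stability} is more than bilinear once $M_R=2$, so the reduction to quadratic constraints is immediate; the work lies entirely in matching the indexing conventions.

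First I would substitute $M_R=2$ into \eqref{opt: stability}. For each $i\in\{1,2\}$ the product $\prod_{\ell\neq j} X(r_\ell)$ contains a single factor, so the constraint collapses to
$$\sum_{s,p} P_{i1}(s,p)\, X(s,p)\, X(r_2) \;+\; \sum_{s,p} P_{i2}(s,p)\, X(s,p)\, X(r_1) \;-\; \theta_i^{2}\, X(r_1)\, X(r_2) \;\leq\; 0,$$
which is manifestly a quadratic polynomial in the entries of the decision vector. This already proves the qualitative content of the lemma — that \eqref{opt: stability} becomes a pair of quadratic inequalities — and all that remains is to read off the matrices $\overline{Q}_1,\overline{Q}_2$.

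Next I would identify each linear factor above as a linear function of the augmented vector $X$ prescribed in the lemma. Because entries of $X$ are ordered first by power, then by MDP state, and finally by data rate, the first $M_s M_p$ entries correspond to $r_1$ and the remaining $M_s M_p$ entries to $r_2$. Consequently the row vectors $I_2=[\mathbf{e},\mathbf{0}]$ and $I_1=[\mathbf{0},\mathbf{e}]$ act as selectors giving $X(r_1)=I_2 X$ and $X(r_2)=I_1 X$. Similarly, the marginal $X(s,p)=\sum_j X(r_j,s,p)$ adds the two $(s,p)$-aligned copies in $X$, so $\sum_{s,p} P_{ij}(s,p)\, X(s,p) = A_{ij} X$ with $A_{ij}=[\vec{P}_{ij},\vec{P}_{ij}]$, matching the definitions in the lemma.

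Finally I would substitute these identifications into each of the three terms of the specialized constraint and use the elementary identity $(aX)(bX)=X^{T}a^{T}b\,X$ for row vectors $a,b$ to factor out $X^{T}(\cdot)X$. The three resulting rank-one matrices are exactly $A_{i1}^{T}I_1$, $A_{i2}^{T}I_2$, and $\theta_i^{2} I_2^{T}I_1$, giving the claimed $\overline{Q}_i$ up to transposition conventions. The main obstacle is purely bookkeeping: aligning the block ordering of $X$ with the vectorization used for $\vec{P}_{ij}$ and with the placements of $\mathbf{0}$ and $\mathbf{e}$ in $I_1,I_2$. A minor additional point worth noting is that the expressions for $\overline{Q}_i$ need not be symmetric, but since $X^{T}M X = X^{T}\bigl((M+M^{T})/2\bigr)X$, one may symmetrize them without affecting the constraints, which is what is needed for a subsequent SDP formulation.
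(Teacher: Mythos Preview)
Your proposal is correct and follows essentially the same route as the paper's proof: specialize \eqref{opt: stability} to $M_R=2$, identify $X(r_1),X(r_2)$ and $\sum_{s,p}P_{ij}(s,p)X(s,p)$ as linear forms $I_2X,\,I_1X,\,A_{ij}X$, and then factor each bilinear term as $X^{T}a^{T}b\,X$. Your remark that the resulting $\overline{Q}_i$ match only up to transposition conventions (and can be symmetrized without loss) is apt, since the paper's own statement and proof are internally inconsistent about the roles of $I_1,I_2$ and about which $A$-term carries the transpose.
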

\begin{proof}
	The proof is provided in Appendix \ref{appendix}.
\end{proof}
\begin{figure}
	\centerline{\includegraphics[width=.8\columnwidth]{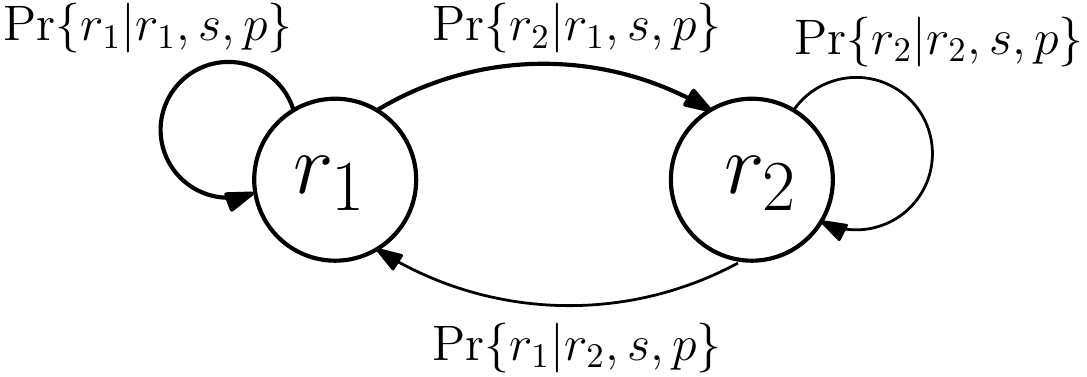}}
	\caption{A two-state SD-MC model.}
	\label{fig: two-state-MC}
\end{figure}

Based on Lemma \ref{lemma:qcp}, Theorem \ref{thm:qp-sdp} below shows that the resulting constrained quadratic optimization problem can be solved using SDP methods.  
\begin{theorem}
	\label{thm:qp-sdp}
Consider the two-state SD-MC in Fig.~\ref{fig: two-state-MC} and suppose that the condition $\overline{Q}_{1}, \overline{Q}_{2} \succeq 0$ holds for the matrices in the quadratic constraints \eqref{ineq:qc-1} and \eqref{ineq:qc-2}.
Then, the optimization problem in \eqref{opt: qcqp} can be solved using SDP methods. 
\end{theorem}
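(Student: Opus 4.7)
The plan is to show that, under the hypothesis $\overline{Q}_1, \overline{Q}_2 \succeq 0$, the optimization in \eqref{opt: qcqp} is a convex QCQP with linear objective, which admits a standard reformulation as a semidefinite program via Schur complement. I would first vectorize the decision variables $\{X(s,r,p)\}$ into the vector $X \in \mathbb{R}^{2 M_s M_p}$ introduced in Lemma \ref{lemma:qcp}. The objective \eqref{opt: obj} then takes the linear form $c^T X$ for an appropriate cost vector $c$; the flow-balance equalities \eqref{opt: mdp2} and the normalization \eqref{opt: occupation} become linear equalities in $X$; and the implicit nonnegativity $X(s,r,p) \geq 0$ becomes $X \geq 0$ componentwise. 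By Lemma \ref{lemma:qcp}, the two polynomial stability constraints \eqref{opt: stability} collapse to the quadratic inequalities $X^T \overline{Q}_i X \leq 0$ for $i = 1, 2$, which are convex under the PSD hypothesis.

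Next, I would apply the standard Shor lifting. Introduce an auxiliary symmetric matrix variable $Y$ together with the Schur-complement LMI
\[
\begin{pmatrix} Y & X \\ X^T & 1 \end{pmatrix} \succeq 0,
\]
which is equivalent to $Y \succeq XX^T$. Replacing each quadratic form $X^T \overline{Q}_i X$ by the linear expression $\mathrm{tr}(\overline{Q}_i Y)$ and imposing $\mathrm{tr}(\overline{Q}_i Y) \leq 0$ then yields a program with a linear objective, linear equality and inequality constraints, and the single LMI above, i.e., a standard SDP in primal form. The crucial step is to argue that this reformulation is exact rather than merely a relaxation. Because $\overline{Q}_i \succeq 0$ and $Y - XX^T \succeq 0$, one has $\mathrm{tr}\bigl(\overline{Q}_i (Y - XX^T)\bigr) \geq 0$, so $X^T \overline{Q}_i X \leq \mathrm{tr}(\overline{Q}_i Y) \leq 0$ for every SDP-feasible $(X, Y)$; conversely, any $X$ feasible for the original QCQP extends to an SDP-feasible point by taking $Y = XX^T$, with the same objective value. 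Hence the two problems share the same optimum, and an optimal $X^\star$ is read off directly from any optimal SDP solution.

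The main obstacle is precisely this equivalence argument: convexity of the feasible set is immediate from $\overline{Q}_i \succeq 0$, but verifying that the Shor lifting preserves optimality, so that the SDP solver returns a genuine QCQP optimizer rather than a strict lower bound, hinges on the trace inequality above. With this piece in place, the resulting program is solvable in polynomial time by interior-point SDP methods, and the optimal occupation measures $\{X^\star(s, r, p)\}$ then recover the optimal stationary power policy $\mu_p^\star$ and MDP state distribution $\pi^\star$ via \eqref{eq:power-policy} and \eqref{eq:policy-s}.
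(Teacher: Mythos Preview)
Your proposal is correct and follows the same logic as the paper: the PSD hypothesis on $\overline{Q}_1,\overline{Q}_2$ makes the quadratic constraints convex, so the resulting convex QCQP with linear objective is SDP-representable. The paper's own proof is a one-line appeal to this standard fact (citing a reference on semidefinite optimization), whereas you spell out the Shor lifting and verify its exactness via the trace inequality $\mathrm{tr}\bigl(\overline{Q}_i(Y-XX^T)\bigr)\ge 0$---more detailed, but not a different route.
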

\begin{proof}
The proof is straightforward from the fact that sets formed by the constraints \eqref{ineq:qc-1} and \eqref{ineq:qc-2} are convex if matrices $\overline{Q}_{1}, \overline{Q}_{2}$ are positive semidefinite~(i.e. $\overline{Q}_{1}, \overline{Q}_{2} \succeq 0$) \cite{blekherman2012semidefinite}.
\end{proof}
\subsubsection{Linear Programming Relaxations}
\label{subsec:lpr}
This subsection discusses a linear programming (LP) relaxation of the CPOP \eqref{opt: qcqp}. 
The key idea of such a relaxation is the use of a more conservative $1$-norm form of the stability condition derived in \eqref{ineq:condition-lambda}. 
Specifically,  Theorem \ref{thm:lpr} below shows that both the ASE and ASAS properties can be attained under a conservative $T_{MATI}$ that is induced by the $1$-norm condition. Furthermore, the subsequent Proposition \ref{prop:lpr} proves that such a more conservative $1$-norm stability condition leads to linear constraints in the formulated optimization problem. 

\begin{theorem}
	\label{thm:lpr}
Consider the stochastic hybrid system in \eqref{sys:continuous}-\eqref{sys:jump}, the SD-MC model in \eqref{sys:ss-MC}, and the controlled external environment  $\mathcal{M}_{env}$. Suppose that the $T_{MATI}$ is defined by the condition in \eqref{ineq:MATI} for some constant $\overline{\lambda} \in \mathbb{R}_{+}$ that satisfies 
\begin{align}
\label{ineq:norm-1}
\overline{\lambda} > \sqrt{M_R\|\diag(\lambda_{i}^{2})\overline{P}(\mu)\|_{1}}
\end{align}
Then the stochastic hybrid system \eqref{sys:continuous}-\eqref{sys:jump} is ASE if Assumption \ref{assumption} holds.
Furthermore,  system \eqref{sys:continuous}-\eqref{sys:jump} is ASAS if the condition in \eqref{ineq:new} holds. 
\end{theorem}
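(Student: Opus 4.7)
The plan is to reduce Theorem \ref{thm:lpr} to the already-established Theorems \ref{thm:ase} and \ref{thm:asas} by showing that the proposed $1$-norm condition \eqref{ineq:norm-1} is strictly stronger than the infinity-norm condition \eqref{ineq:condition-lambda}. Since the $T_{MATI}$ in \eqref{ineq:MATI} is identical in both statements, once we have this norm implication the two conclusions (ASE and ASAS) follow directly from the earlier theorems without any new dynamical analysis.

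First I would recall the paper's convention (stated in the Notations paragraph) that $|A|=\|A\|_{\infty}$ denotes the maximum absolute row sum, so that the norm appearing inside \eqref{ineq:condition-lambda} is the induced $\infty$-norm. The matrix $M:=\diag(\lambda_i^2)\overline{P}(\mu)$ is of dimension $M_R\times M_R$. For any such square matrix the elementary matrix-norm equivalence
\begin{equation*}
\|M\|_{\infty}\;\le\; M_R\,\|M\|_{1}
\end{equation*}
holds, since $\|M\|_{\infty}=\max_i \sum_j |M_{ij}|\le \sum_i \sum_j |M_{ij}| \le M_R \max_j \sum_i |M_{ij}| = M_R \|M\|_1$. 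Taking square roots preserves the inequality, which yields
\begin{equation*}
\sqrt{\|\diag(\lambda_i^2)\overline{P}(\mu)\|_{\infty}}\;\le\;\sqrt{M_R\,\|\diag(\lambda_i^2)\overline{P}(\mu)\|_{1}}.
\end{equation*}

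Consequently, any $\overline{\lambda}$ that satisfies the hypothesis \eqref{ineq:norm-1} automatically satisfies \eqref{ineq:condition-lambda}. Given that $T_{MATI}$ is defined through the \emph{same} expression \eqref{ineq:MATI} in both the old and the new theorem statement, the pair $(T_{MATI},\overline{\lambda})$ fulfills every hypothesis of Theorem \ref{thm:ase}. Therefore the stochastic hybrid system \eqref{sys:continuous}-\eqref{sys:jump} is ASE under Assumption \ref{assumption}; an analogous invocation of Theorem \ref{thm:asas} under the strengthened Lyapunov conditions \eqref{ineq:new} delivers the ASAS conclusion.

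The main (and essentially only) obstacle is the norm-equivalence step, which is standard but needs care in identifying which induced norm is meant by $\|\cdot\|$ in \eqref{ineq:condition-lambda}; once the paper's convention $|A|=\|A\|_{\infty}$ is invoked, the argument is immediate and no new stochastic-stability machinery (martingale estimates, Borel--Cantelli, or Lyapunov dissipation) needs to be redeveloped. The value of Theorem \ref{thm:lpr} is thus not a new stability mechanism but a \emph{surrogate constraint} that is linear in the decision variables of the CPOP \eqref{opt: qcqp}, a fact that Proposition \ref{prop:lpr} will exploit to replace the polynomial inequality \eqref{opt: stability} by a tractable LP relaxation.
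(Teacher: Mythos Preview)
Your proposal is correct and matches the paper's own proof essentially line for line: both arguments use the elementary matrix-norm equivalence $\|A\|_{\infty}\le M_R\|A\|_{1}$ to show that \eqref{ineq:norm-1} implies \eqref{ineq:condition-lambda}, and then invoke Theorems \ref{thm:ase} and \ref{thm:asas} directly. The only additional content in your write-up is the explicit chain $\max_i\sum_j|M_{ij}|\le \sum_{i,j}|M_{ij}|\le M_R\max_j\sum_i|M_{ij}|$, which the paper simply cites as a known bound.
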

\begin{proof}
	The proof is straightforward by realizing the conditions between matrix $\infty$-norm and $1$-norm. 
	Specifically, since $\|\diag(\lambda_{i}^{2})\overline{P}(\mu)\| \leq M_R \|\diag(\lambda_{i}^{2})\overline{P}(\mu)\|_{1}$ due to the matrix condition $\|A\| \leq n\|A\|_{1}, \forall A \in \mathbb{R}^{m \times n}$, then $\forall \overline{\lambda} > 0$ we have
	\begin{align*}
		\overline{\lambda} > \sqrt{M_R\|\diag(\lambda_{i}^{2})\overline{P}(\mu)\|_{1}} \geq \sqrt{\|\diag(\lambda_{i}^{2})\overline{P}(\mu)\|}
	\end{align*}
The above inequalities imply that the satisfaction of the $1$-norm condition in \eqref{ineq:norm-1} leads to the stability condition in \eqref{ineq:condition-lambda} that are shown to ensure ASE as proven in Thereom \ref{thm:ase} and ASAS as proven in Theorem \ref{thm:asas}, respectively. Thus, the proof is complete. 
\end{proof}
\begin{remark}
The use of the stability constraints \eqref{ineq:norm-1} in the co-design problem essentially trades the optimality with computational efficiency. 
Regarding the system optimality, the constrained optimization problem with constraints \eqref{ineq:norm-1} leads to sub-optimal solutions because the stability constraints in \eqref{ineq:norm-1} are more conservative than the polynomial constraints in \eqref{ineq:condition-lambda}. 
Specifically, the size of the sets formed by the constraints in \eqref{ineq:norm-1} is inversely proportional to the number of the data rates $M_{R}$. 
Despite this sub-optimality, the use of constraints \eqref{ineq:norm-1} in the co-design problem reduces the original NP-hard CPOP into more computationally tractable LP problems  which can be solved efficiently.
This fact is stated in Proposition \ref{prop:lpr} below.
\end{remark}

\begin{proposition}
	\label{prop:lpr}
	For a given $\overline{\lambda} \in \mathbb{R}_{+}$, let $\mathcal{H}_{\mu}$ denotes a set of joint policies $\mu=(\mu_{m}, \mu_{p})$ which satisfies the $\infty$-norm stability condition in \eqref{ineq:condition-lambda}, and let $\mathcal{H}_{\tilde{\mu}}$ denotes a set of joint policies $\tilde{\mu}=(\tilde{\mu}_{m}, \tilde{\mu}_{p})$ which satisfies the $1$-norm stability condition in \eqref{ineq:norm-1}. Then
	\begin{enumerate}
		\item $\mathcal{H}_{\tilde{\mu}} \subset \mathcal{H}_{\mu}$.
		\item with the $1$-norm stability condition \eqref{ineq:norm-1}, the polynomial inequalities \eqref{opt: stability} of the CPOP in \eqref{opt: qcqp} can be reduced to linear inequalities.
		\item let $\mu^{\star}$ and $J^{\star}$ denote the optimal joint policy and cost, respectively, of the optimization problem \eqref{opt: game-1} with the $\infty$-norm stability condition. Let $\tilde{\mu}^{\star}$ and $\tilde{J}^{\star}$ denote the optimal joint policy and cost, respectively, of the optimization problem \eqref{opt: game-1} with the $1$-norm stability condition. Then $J^{\star} \leq \tilde{J}^{\star}$. 
	\end{enumerate}
\end{proposition}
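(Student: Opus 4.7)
The plan is to dispatch the three parts in the natural order, since each builds on the previous one. Part (1) is essentially a corollary of the argument already used in the proof of Theorem \ref{thm:lpr}: I would simply observe that for any joint policy $\tilde{\mu} \in \mathcal{H}_{\tilde{\mu}}$, the matrix norm inequality $\|\diag(\lambda_{i}^{2})\overline{P}(\tilde{\mu})\| \leq M_R \|\diag(\lambda_{i}^{2})\overline{P}(\tilde{\mu})\|_{1}$ combined with \eqref{ineq:norm-1} directly yields \eqref{ineq:condition-lambda}, hence $\tilde{\mu} \in \mathcal{H}_{\mu}$. The inclusion is strict because the $1$-norm bound is uniformly larger than the $\infty$-norm bound whenever $M_R > 1$, so there exist policies satisfying the $\infty$-norm but not the $1$-norm condition.

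For part (2), the key calculation I would carry out is to rewrite the $1$-norm condition in coordinates of the occupation measures. Since $\|\diag(\lambda_{i}^{2})\overline{P}(\mu)\|_{1}$ equals the maximum column sum $\max_{j}\sum_{i}\lambda_{i}^{2}\overline{P}_{ij}(\mu)$, the inequality \eqref{ineq:norm-1} is equivalent to the family of scalar inequalities $\sum_{i}\lambda_{i}^{2}\overline{P}_{ij}(\mu) < \overline{\lambda}^{2}/M_{R}$ for every $j$. Substituting the definition $\overline{P}_{ij}(\mu)=\sum_{s,p}P_{ij}(s,p)\mathbb{P}(s,p\mid r_{j})$ and using $\mathbb{P}(s,p\mid r_{j})=X(s,r_{j},p)/X(r_{j})$, I would then multiply both sides by $X(r_{j})\geq 0$ to clear the denominator. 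This eliminates the product structure in \eqref{opt: stability} because, unlike the $\infty$-norm case where one must take a root of the full matrix norm and hence obtains a product of the $X(r_{\ell})$'s across all row indices, the per-column $1$-norm bound contains only a single occurrence of $X(r_{j})$ on the right-hand side. The resulting constraint $\sum_{i}\lambda_{i}^{2}\sum_{s,p}P_{ij}(s,p)X(s,r_{j},p) \leq (\overline{\lambda}^{2}/M_{R})\,X(r_{j})$ is manifestly linear in the decision variables.

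Part (3) is then immediate from Part (1): the relaxed problem associated with $\tilde{\mu}$ is optimizing the same linear objective \eqref{opt: obj} over a strictly smaller feasible set, namely those occupation measures whose induced policies lie in $\mathcal{H}_{\tilde{\mu}}\subset \mathcal{H}_{\mu}$. Consequently the minimum attained is at least as large as the minimum over the larger set, yielding $J^{\star}\leq \tilde{J}^{\star}$.

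The only step where some care is required is part (2), since one must be careful to verify that multiplying by $X(r_{j})$ preserves equivalence (it does, because $X(r_{j})\geq 0$ is already enforced via the non-negativity of the occupation measures, and the degenerate case $X(r_{j})=0$ makes the conditional $\overline{P}_{ij}$ undefined but the linearized constraint trivially holds). The rest is bookkeeping with the definitions introduced in Theorem \ref{thm:qcqp}.
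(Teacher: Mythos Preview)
Your proposal is correct and follows essentially the same route as the paper: parts (1) and (3) are handled via the matrix-norm inequality $\|\cdot\|_{\infty}\le M_R\|\cdot\|_{1}$ to get the set inclusion and hence the cost ordering, and part (2) is handled by expanding the $1$-norm as a maximum column sum, substituting the occupation-measure expression for $\mathbb{P}(s,p\mid r_j)$, and clearing the single denominator $X(r_j)$ to obtain linear constraints. Your version is in fact slightly cleaner, since you obtain the correct constant $\overline{\lambda}^{2}/M_R$ (the paper writes $(\overline{\lambda}/M_R)^{2}$, an apparent typo) and you explicitly address the degenerate case $X(r_j)=0$.
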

\begin{proof}
	The proof is in included in Appendix \ref{appendix}.
\end{proof}

Assume that the optimal power policy $\mu_{p}^{\star}=\{\mathbb{P}(p | r)\}_{p \in \Omega_{p}, r \in \mathcal{R}}$ and optimal stationary distribution for the MDP states $\pi^{\star}=\{\mathbb{P}(s)\}_{s \in S}$ are obtained from Theorem \ref{thm:qcqp}.
The next step is then to find a control policy $\mu_{m}=\{\mathbb{P}(a | s) \}_{a \in A(s), s \in S}$ for the MDP to achieve the optimal stationary distribution $\pi^{\star}$. 
Let $\{c_{m}(s, a)\}_{s \in S, a \in A}$ denote the cost function for each state-action pair of the MDP process. 
Then the optimal control policy $\mu_{m}^{\star}$ can be obtained by solving the optimization in Problem \ref{problem-2} below.
\begin{problem}
	\label{problem-2}
Consider an ergodic MDP process $\mathcal{M}_{env}=\{S, s_{0}, A, Q\}$ with a desired stationary distribution $\pi^{\star}$ that is obtained by solving the optimization problem \eqref{opt: qcqp}.
Based on such a $\pi^{\star}$, find an optimal control policy $\mu_{m}^{\star}$ that solves the following optimization problem while attaining the desired stationary distribution $\pi^{\star}$.
	\begin{subequations}
\label{problem-control}
\begin{alignat}{2}
&\!\min_{\mu_{m}}
& & \lim_{T \rightarrow \infty}\frac{1}{T} \mathbb{E}_{s_{0}}^{\mu_{m}} \sum_{i=0}^{T} c_{m}(s_{k}, a_{k}) \label{problem-opt}\\
& \text{s.t.}
& & Q(\mu_{m})\pi^{\star}=\pi^{\star}  \label{problem-constraint}
\end{alignat}
\end{subequations}
where $Q(\mu_{m})$ is the transition matrix of the induced Markov chain under the control policy $\mu$. 
\end{problem}

Theorem \ref{thm:control-lp} below shows that Problem \ref{problem-2} can be formulated and efficiently solved using LP methods. 
\begin{theorem}
\label{thm:control-lp}
Consider an ergodic MDP process $\mathcal{M}_{env}=\{S, s_{0}, A, Q\}$ with the associated cost function $\{c_{m}(s, a)\}_{s \in S, a \in A}$.
For a given stationary distribution $\pi^{\star}=[\pi^{\star}(s_1), \ldots, \pi^{\star}(s_{M_s})]^{T}$ with $\pi^{\star}(s)$ being the probability distribution of the state $s \in S$, let $\{Y(s, a)\}_{s \in S, a \in A}$ be the decision variables in the following LP problem:
	\begin{subequations}
	\label{opt-control}
	\begin{alignat}{5}
	&\!\min_{\{Y(s, a)\}} & & \sum_{s \in S, a \in A} c_{m}(s,a)Y(s,a) \label{control-obj}\\
	& \text{s.t.}
	& & \sum_{a} Y(s, a)-\sum_{s', a} q(s |s', a)Y(s', a)=0, \forall s \in S \label{control-mdp2} \\
	& & &\sum_{s \in S, a \in A}Y(s, a)=1, \quad Y(s, a) \geq 0, \forall s, a \label{control-occupation} \\
	& & &\sum_{a \in A}Y(s, a) = \pi^{\star}(s), \forall s \in S. \label{control-stationary}
	\end{alignat}
\end{subequations}
Then, Problem \ref{problem-2} can be solved based on the LP problem formulated in \eqref{opt-control} and the corresponding optimal control policy $\mu^{\star}_{m}$ can be obtained by 
\begin{align}
\mathbb{P}(a | s)=\frac{Y^{*}(s, a)}{\sum_{a \in A}Y^{*}(s, a)}, \quad \forall s \in S, a \in A
\end{align}
where $\{Y^{*}(s, a)\}$ is the solution to the LP problem in \eqref{opt-control}.
\end{theorem}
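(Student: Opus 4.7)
The plan is to recognize (\ref{opt-control}) as the standard occupation-measure LP for average-cost MDPs in the spirit of \cite{altman1999constrained}. I would interpret the decision variable $Y(s,a)$ as the long-run joint probability of visiting state $s$ and choosing action $a$ under some stationary policy, and then show a bijection between feasible $Y$ in (\ref{opt-control}) and feasible $(\mu_m,\pi^\star)$ pairs in Problem \ref{problem-2}. The proof would proceed in three steps: (i) verify that (\ref{control-mdp2})--(\ref{control-occupation}) characterize valid state--action occupation measures of the ergodic MDP, (ii) verify that (\ref{control-stationary}) is equivalent to the marginal-matching constraint (\ref{problem-constraint}), and (iii) verify that the LP objective coincides with the average cost (\ref{problem-opt}).

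For step (i), given any feasible $Y$ with $\sum_a Y(s,a) > 0$ I would define a stationary policy by $\mathbb{P}(a \vert s) = Y(s,a) \big/ \sum_{a' \in A} Y(s, a')$; constraint (\ref{control-mdp2}) is then precisely the balance equation $\sum_a Y(s,a) = \sum_{s',a} q(s \vert s', a) Y(s',a)$, which together with the normalization (\ref{control-occupation}) states that the marginal $\pi(s) := \sum_a Y(s,a)$ is a stationary distribution of the Markov chain induced by $\mathbb{P}(a\vert s)$. Conversely, every stationary policy $\mu_m$ induces, by ergodicity, a unique occupation measure $Y(s,a) = \pi^{\mu_m}(s)\,\mathbb{P}^{\mu_m}(a \vert s)$ satisfying (\ref{control-mdp2})--(\ref{control-occupation}). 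This is the classical equivalence between stationary policies and occupation measures.

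For step (ii), combining (\ref{control-stationary}) with the decomposition $Y(s,a) = \pi(s)\,\mathbb{P}(a\vert s)$ yields $\pi(s) = \pi^\star(s)$, so the induced chain has stationary distribution $\pi^\star$. Equivalently, $Q(\mu_m)\pi^\star = \pi^\star$, which is exactly (\ref{problem-constraint}). For step (iii), the ergodic theorem gives
\begin{align*}
\lim_{T \to \infty} \frac{1}{T}\, \mathbb{E}_{s_0}^{\mu_m}\!\sum_{k=0}^{T} c_m(s_k, a_k) = \sum_{s, a} c_m(s, a)\, Y(s, a),
\end{align*}
so the minimization in (\ref{problem-opt}) agrees pointwise with (\ref{control-obj}) over the feasible sets identified in (i)--(ii). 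Hence solving the LP (\ref{opt-control}) produces an optimal $Y^\star$, and the policy recovery formula $\mathbb{P}(a \vert s) = Y^\star(s, a) \big/ \sum_{a \in A} Y^\star(s, a)$ yields an optimal $\mu_m^\star$ for Problem \ref{problem-2}.

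The main obstacle I expect is bookkeeping at ``null'' states where $\sum_a Y^\star(s,a) = 0$: the conditional probability $\mathbb{P}(a \vert s)$ is ill-defined there, and I would need to argue that such states are transient (or unreachable) under the induced chain, so that the policy may be set arbitrarily without affecting the average cost or the attainment of $\pi^\star$. This is handled by invoking the ergodicity assumption of $\mathcal{M}_{env}$ and the fact that $\pi^\star$ is supported exactly on the recurrent class, ensuring that the extracted policy is well-defined on all states that matter for both the constraint (\ref{problem-constraint}) and the objective (\ref{problem-opt}).
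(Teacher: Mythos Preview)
Your proposal is correct and takes essentially the same approach as the paper: both rest on the occupation-measure LP representation of constrained average-cost MDPs from \cite{altman1999constrained}, together with the observation that (\ref{control-stationary}) encodes the stationary-distribution constraint (\ref{problem-constraint}). The paper's own proof is a two-sentence citation to \cite[Chapter~4, eq.~(4.3)]{altman1999constrained}, whereas you have spelled out the bijection between feasible occupation measures and stationary policies, the equivalence of the objectives via the ergodic theorem, and the handling of null states---all of which is implicit in the cited reference.
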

\begin{proof}
	The proof follows straightforwardly from the LP representation of constrained MDP~(cf. equation (4.3) in \cite[Chapter 4 ]{altman1999constrained}). The equation \eqref{control-stationary} is equivalent to the constraint of the stationary distribution imposed in \eqref{problem-constraint}. 
\end{proof}
\section{Simulation Results}
\label{sec:simulation}
This section presents simulation results to verify the given results on stochastic stability~(Theorem \ref{thm:asas}).
The simulation results are also used to demonstrate the advantages of the proposed optimal co-design policy~(i.e. Theorem \ref{thm:qp-sdp}-\ref{thm:lpr}) by comparing the resulting system performance against that based on the separation design method which is widely adopted in current literature.

\begin{table}[!b]
	\caption{MDP $\mathcal{M}$ Transition Probability~$Q$ and Costs~$c_{m}$}
	\label{table:tp-mdp}
	\centering
	\begin{small}
	\begin{tabular}{c|c c ||c}
		& $s_1$ & $s_2$ & $c_{m}(s, a)$ \\
		\hline
		$s_1, Stay$ & $0.9$ & $0.1$ & $0.4$\\
		$s_1, Go$ & $0.1$ & $0.9$ & $0.4$\\
		\hline
		$s_2, Stay$ & $0.9$ & $0.1$ & $0.6$\\
		$s_2, Go$ & $0.1$ & $0.9$ & $0.6$\\
		\hline
	\end{tabular}
	\end{small}
\end{table}

In the simulation, a linear batch reactor process described in \cite{nesic2004input} is used for the NCS part. 
The considered state-space model of an unstable linear batch reactor process is given by
\begin{align*}
	\dot{x}=Ax+Bu
\end{align*} 
where
\begin{small}
\begin{align*}
	A&=\begin{bmatrix}
		1.38 & -0.2077 & 6.715 & -5.676 \\
		-0.5814 & -4.29 & 0 & 0.675\\
		1.067 & 4.273  & -6.654 & 5.893 \\
		0.048 & 4.273 & 1.343 & -2.104
	\end{bmatrix} \\ B &= \begin{bmatrix}
		0 & 0 \\ 
		5.679 & 0 \\
		1.136 & -3.146 \\
		1.136 & 0
	\end{bmatrix}
\end{align*}
\end{small}
The process is assumed to be controlled by a remote controller which uses an estimated value of the state $\hat{x}$ that is procuced by a model-based estimator. 
The system dynamics from the controller side may then be modeled as follows.
\begin{align*}
	\hat{x}&=A\hat{x}+Bu \\
	u&=K\hat{x}, \quad \forall t \in [t_{k}, t_{k}+T)
\end{align*}
where $T$ is a constant time internal that is selected to satisfy the MATI conditions. 
By using the emulation-based method, the state feedback controller gain $K$ which assures the exponential stability of the system without considering the network effect  is designed to be   the following.

\begin{align*}
	K=\begin{bmatrix}
		0.6961	& 0.8133 &	0.5639	& -1.8492 \\
		2.6908 & 1.1764 &	-1.2762 & 0.9968
	\end{bmatrix}
\end{align*}

\begin{table}[!b]
	\caption{$\mathcal{M}_{c}$ Transition Probability~$P_{ij}(s, p)$ and Power-Rate Costs~$c_{p}$, $c_{r}$}
	\label{table:tp-mc}
	\centering
\begin{small}
	\begin{tabular}{c|c c ||c c}
		& $r_1$ & $r_2$ & $c_{p}(p)$ & $c_{r}(r)$ \\
		\hline
		$r_1, (s_1, L)$ & $0.8$ & $0.1$ & $0.4$ & $0.6$\\
	    $r_1, (s_1, H)$ & $0.6$ & $0.4$ & $0.6$ & $0.6$\\
	    $r_1, (s_2, L)$ & $0.4$ & $0.6$ & $0.4$ & $0.4$\\
	    $r_1, (s_2, H)$ & $0.1$ & $0.9$ & $0.6$ & $0.4$\\
		\hline
		$r_2, (s_1, L)$ & $0.8$ & $0.2$ & $0.4$ & $0.6$\\
		$r_2, (s_1, H)$ & $0.6$ & $0.4$ & $0.6$ & $0.6$\\
		$r_2, (s_2, L)$ & $0.5$ & $0.5$ & $0.4$ & $0.4$\\
		$r_2, (s_2, H)$ & $0.1$ & $0.9$ & $0.6$ & $0.4$\\
	\end{tabular}
\end{small}
\end{table}

In the simulation, the controlled external environment is assumed to be a moving vehicle that is modeled as a two-state MDP with a state set of $\{s_{1}, s_{2}\}$ and an action set of $\{\text{Go}, \text{Stay}\}$. 
The MDP states represent the partitions of regions in the factory floor.  
Let the state $s_1$ denotes the region that can cause shadow fading when occupied by the vehicle, while $s_{2}$ denotes the non-shadow fading region. 
The transition probabilities of the states under each action are provided in Table \ref{table:tp-mdp}. 
Under the external environment, a two-state SD-MC model is used to simulate the \emph{state-dependent} fading channel that is being utilized by the networked batch reactor process. 
In the two-state SD-MC model, the states of $r_{1}=0$ and $r_{2}=2$ are selected to represent the data rates supported by the wireless communication system. 
A set $\{H, L\}$ is also selected for transmission power with $H$ and $L$ represent the high and low transmission power, respectively. 
The state-dependent transition probabilities under different environment states and transmission power are summarized in Table \ref{table:tp-mc}.

Our simulations consider scenarios where the interests of the MDP system are in conflict with the objective of the NCS.
Specifically, the shadowing state $s_1$ is favored by the MDP system and is associated with a lower cost than the un-shadowing state $s_2$~($c_{m}(s_1, a)=0.4$ and $c_{m}(s_2, a)=0.6$ as shown in Table \ref{table:tp-mdp}). Such scenarios represent challenging situations that are commonly encountered in complex industrial environments where cooperative and adaptive strategies are needed the most.

\subsection{Stochastic Stability}
This subsection focuses on the validation of the stochastic stability result as proved in Theorem \ref{thm:asas} as well as the investigation of how tight the sufficient conditions on MATI are by comparing them against the necessary bounds generated by Monte Carlo simulations. 

\begin{figure}[!t]
	\centerline{\includegraphics[width=.9\columnwidth]{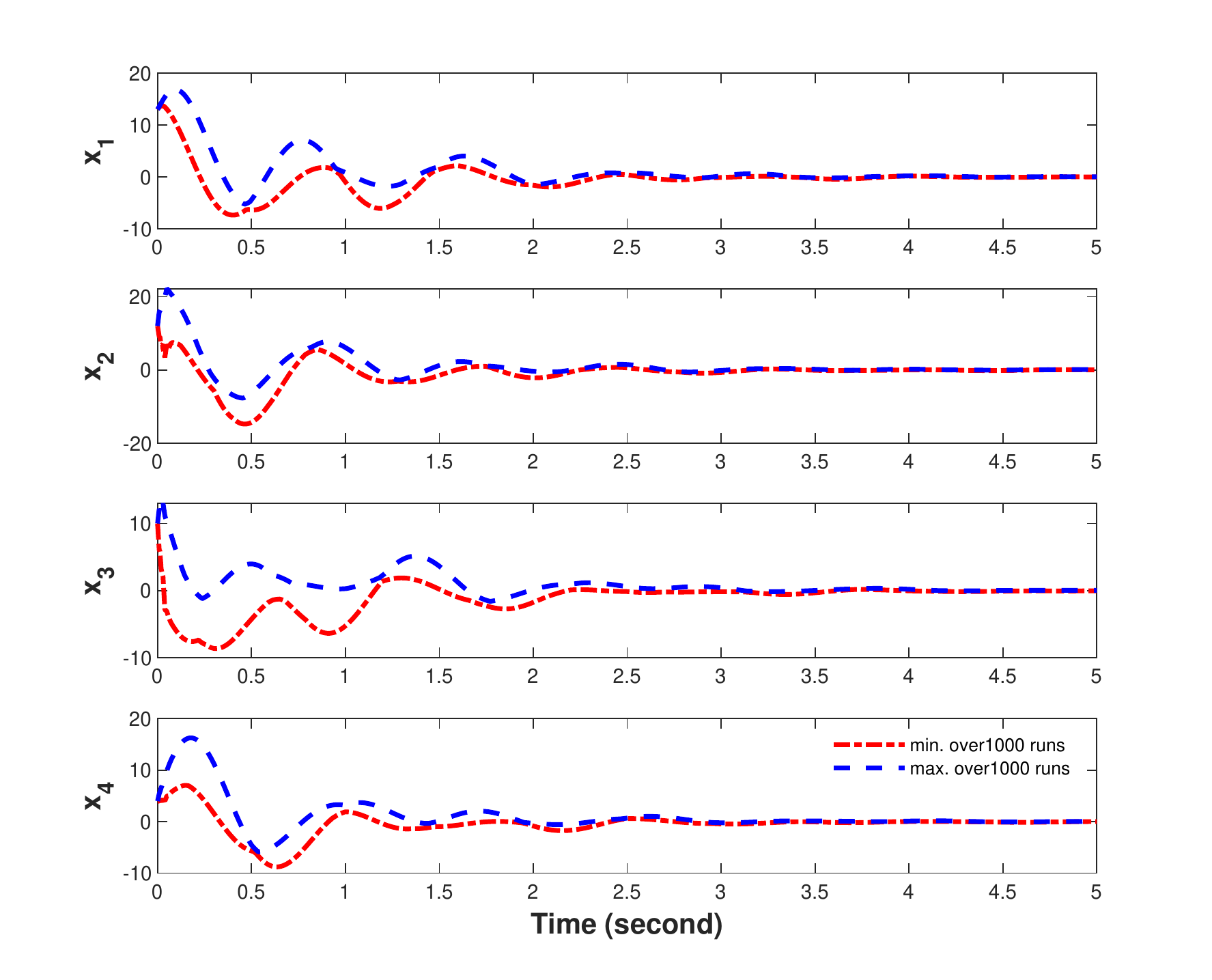}}
	\caption{The maximum and minimum values of the batch reactor process state trajectories for a transmission interval $T=0.01 s$.}
	\label{fig: ss}
\end{figure}

Given the considered process and channel models, the dynamics of the aggregated error states $\overline{e}=[e;\xi]$ where the time derivatives of $e=x-\hat{x}$ and the $\xi$ can be written as $\dot{e}=Ae$ and $\dot{\xi}=|A|\xi$, respectively, then the function $W(\overline{e})$ in Assumption \ref{assumption} can be selected as $W(\overline{e})=|\overline{e}|^2$. With the selected functions and parameters for the batch reactor process, one can determine the parameters $L=17.8870$ and $\zeta=26.5415$. 
As shown in inequality \eqref{ineq:condition-lambda}, the parameter $\overline{\lambda}$ depends on the characteristics of the SD-MC channel. 
With the data rates of $r_{1}=0$ and $r_{2}=2$, then $\lambda_{0}=1$ and $\lambda_{1}=0.5$. 
Let $\overline{P}(\mu)=[0.2, 0.2;0.8, 0.8]$ denotes a transition matrix under a selected control and transmission power policy $\mu$. 
By inequality \eqref{ineq:condition-lambda}, then $\overline{\lambda} > \sqrt{\|\diag(\lambda_{i}^{2})\overline{P}(\mu)\|}=0.6325$. 
Let $\overline{\lambda}=0.64$ and $MATI = 0.0104 s$ so that the transmission time interval is selected to be $T=0.01 s \leq MATI$. 
Fig.~\ref{fig: ss} shows the maximum~(blue dashed line) and minimum~(red dash-dot line) values of the system states evaluated over $1000$ runs with similar initial values and $T=0.01 s$. 
These plots show that both the maximum and minimum trajectories asymptotically converge to the origin, implying that the ASAS property stated in Theorem \ref{thm:asas} under the MATI condition in \eqref{ineq:MATI} holds.

\begin{figure}[!t]
	\centerline{\includegraphics[width=.8\columnwidth]{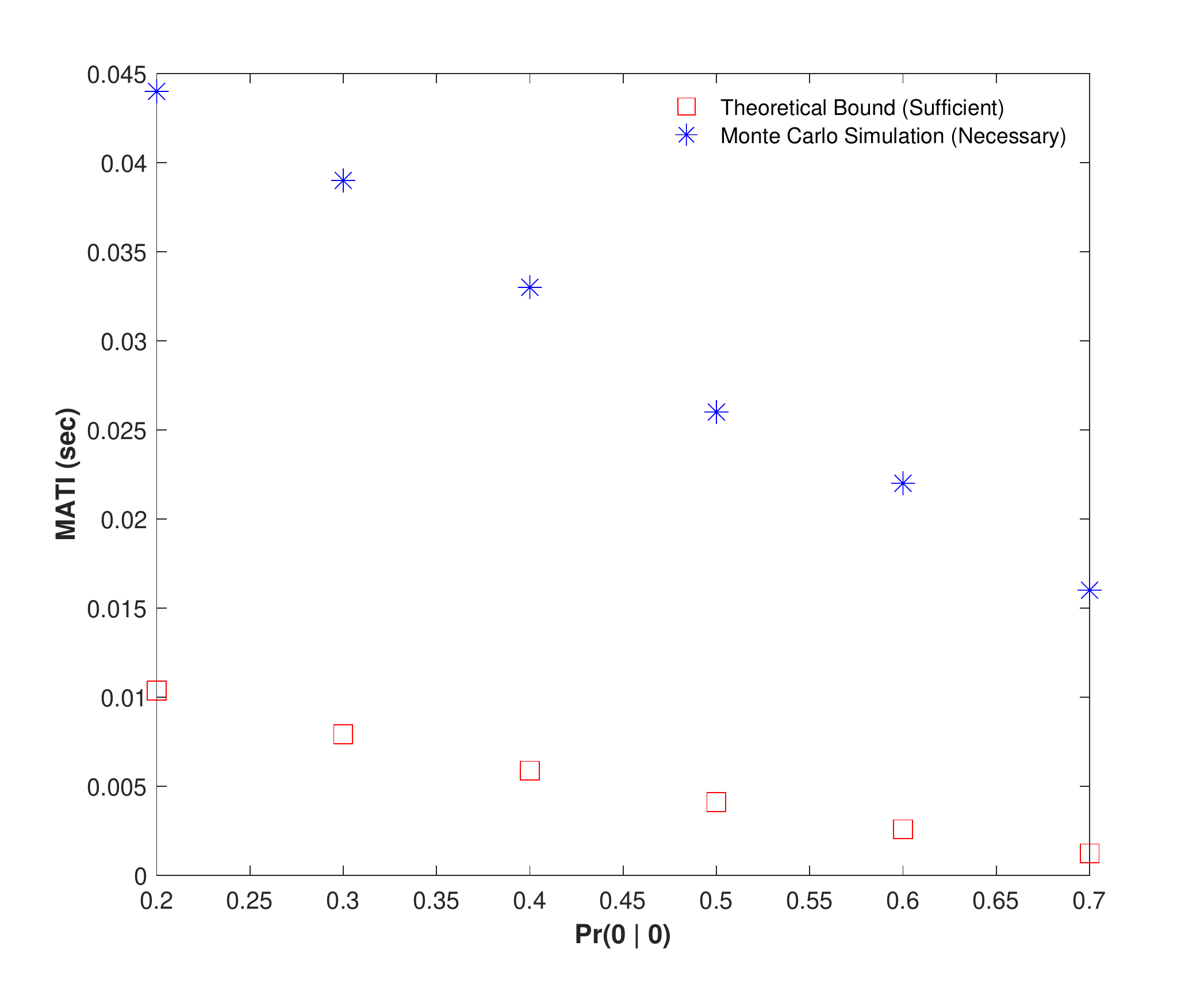}}
	\caption{Comparison of sufficient and necessary bounds on MATI under different channel conditions $\mathbb{P}\{r'= 0 |r=0\}=0.2, 0.3, 0.4, 0.5, 0.6, 0.7$ with $\mathbb{P}\{r'=0 | r=2 \} = 0.2$.}
	\label{fig: MATI}
\end{figure}

In addition, we are also interested in investigating how tight the derived MATI bound is by comparing it against the necessary bounds. 
In Fig.~\ref{fig: MATI}, the theoretical MATI bounds~(marked by red square) derived in inequality \eqref{ineq:MATI} under a variety of channel conditions specified by different transition probabilities of $\mathbb{P}(0~|~0)$ in the Markov channel,  are compared against the necessary bounds~(marked by blue star) that are generated by Monte Carlo simulations. 
Specifically, the necessary MATIs are obtained by gradually increasing the transmission interval $T$ until the system is \emph{almost surely asymptotically unstable}. 
In the simulation, the system is asserted to be almost surely asymptotically unstable if both maximum and minimum values of the system trajectories evaluated over $1000$ samples go unbounded. 
The comparison results show that the MATI bounds in \eqref{ineq:MATI} are around 4 to 8 times more conservative than the necessary bounds generated by Monte Carlo simulations. 
It is worth noting that the performance gap is reasonably close when considering that similar conservativeness were also reported in \cite{nesic2004input} for deterministic system cases. 
Moreover, there exists a trade-off between the MATI bound for stochastic stability and the channel condition that is dictated by the transition probabilities of the SD-MC channel. 
As shown by the simulation results in Fig.~\ref{fig: MATI}, both sufficient and necessary MATI bounds decrease as the channel conditions are degraded\footnote{The increasing value of the transition probability of staying in bad channel state, i.e., $\mathbb{P}(0 | 0)$, is used to simulate the degradation of channel conditions.}. 
The adaptive nature of the MATI as shown by the simulation results is due to the fact that the transmission frequency must be increased to compensate for the packet losses caused by bad channel conditions in order to ensure stochastic stability. 
Since transition probabilities of the SD-MC channel can be controlled through the selection of different control and transmission power policies,  such state-dependent features are exploited to obtain jointly control and communication policies that can attain optimal system performance while assuring stochastic stability. 

\subsection{Optimal Co-design Policy: Quadratic and Linear Programs}
This subsection demonstrates the advantages of using optimal co-design strategies as proposed in Theorem \ref{thm:qcqp}-\ref{thm:lpr} by comparing them against traditional separation design methods. 

In traditional separation design methods (e.g. \cite{gatsis2014optimal}), the design of the optimal power policy is assumed to be independent of the control policy design. 
To ensure a fair comparison, this simulation uses similar parameters for both the co-design and the separation design methods.
The comparison results are also evaluated for a wide range of channel conditions. 

In the simulation, various shadow fading levels are simulated by selecting different transition probability $\mathbb{P}(0|0, s_{1}, H)$ under the shadow fading state $s_1$ and high transmission power $H$ in the channel model. 
For the separation design method, the optimal power policy is designed to minimize only the communication costs\footnote{$\min_{\mu_{p}}\lim_{\ell \rightarrow +\infty}\frac{1}{\ell}\mathbb{E}\sum_{k=0}^{\ell}[c_{p}(p_k)+c_{r}(R_k)]$ with $c_{p}, c_{r}$ listed in Table \ref{table:tp-mc}} while respecting the stability constraint. 
The optimal control policy is determined to minimize the costs\footnote{$\min_{\mu_{m}}\lim_{\ell \rightarrow +\infty}\frac{1}{\ell}\mathbb{E}\sum_{k=0}^{\ell}[c_{m}(s_k, a_k)]$ with $c_{m}$ defined in Table \ref{table:tp-mdp}} induced when controlling the vehicle movements. 
Fig.~\ref{fig: comp-codesign-separation} shows the comparison results of optimal joint costs generated by the separation design method~(marked by a black solid-dotted line) and the co-design strategy~(the quadratic programming result is marked by a red dotted line, while the LP result is marked by a blue dash line). 
As shown in the plots, the quadratic programming-based co-design method leads to lowest costs across the whole range of the shadow fading among all three strategies. 
Interestingly, co-design strategies under both quadratic and linear programmings are more robust in the high shadow fading regime~(i.e., the region between $0.3$ and $0.55$) than that of  the separation design in the sense that the optimal cost curves under the former two are relatively flat regardless of the fading levels, while that based on the separation design method  linearly increases as the fading level is increased.

\begin{figure}[!t]
	\centerline{\includegraphics[width=1\columnwidth]{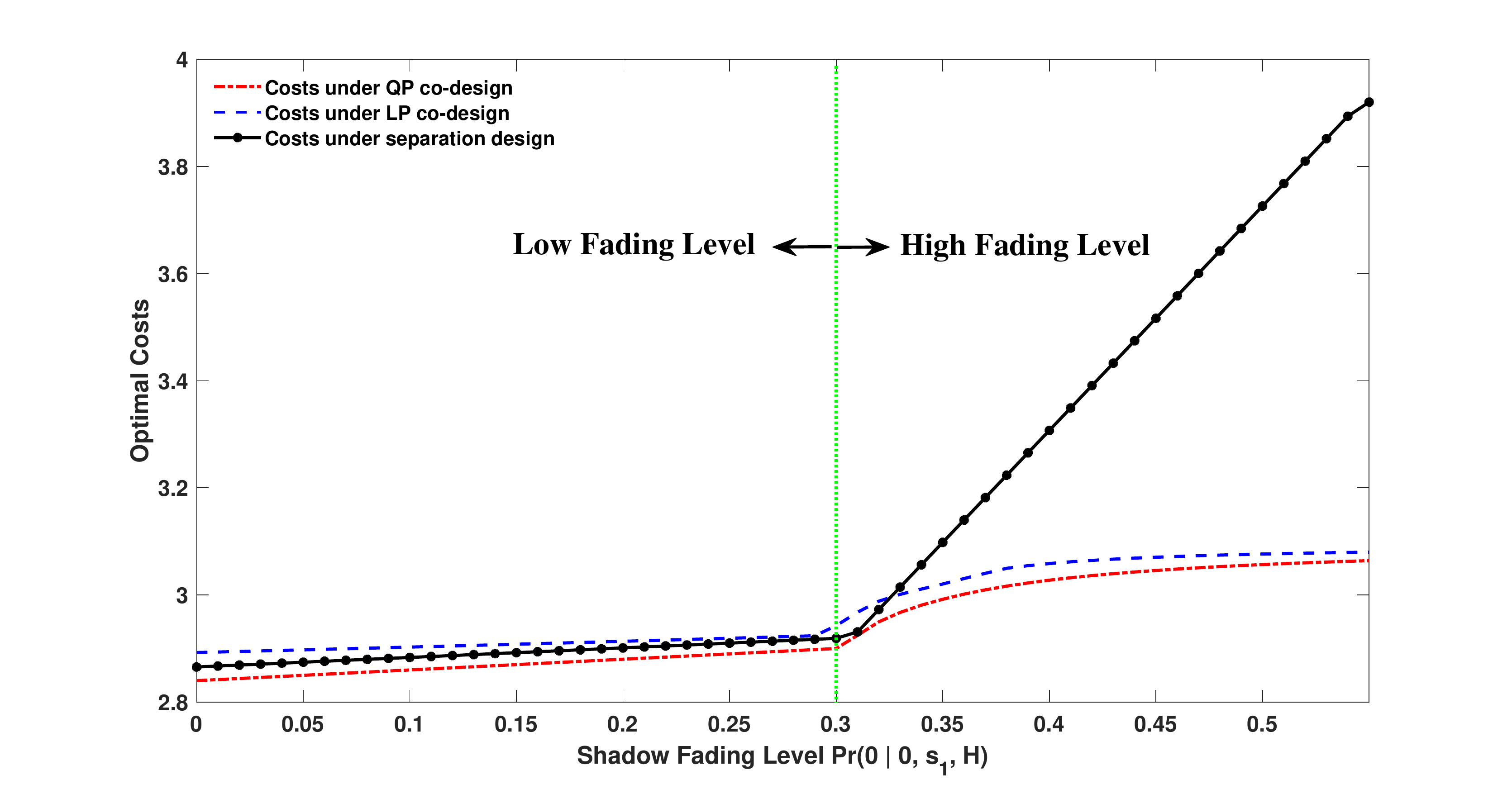}}
	\caption{Performance comparison of the proposed co-design method against the separation method under a wide range of channel conditions ranging from $0$ to $0.55$.}
	\label{fig: comp-codesign-separation}
\end{figure}

One important feature of the proposed co-design strategy which distinguishes itself from the separation design methods is its adaptive nature to different channel conditions by adjusting the joint communication and control strategies. 
To demonstrate such an adaptive nature, simulation results are depicted in Fig.~\ref{fig: optimal-costs-policy} to show the optimal costs and policies generated by the co-design strategies under the quadratic programming~(Lemma \ref{lemma:qcp}) and the LP~(Theorem \ref{thm:control-lp}) methods. 
The top plot in Fig.~\ref{fig: optimal-costs-policy} compares the optimal costs under the LP~(marked by a blue dashed line) and the quadratic programming~(marked by a red dashed-dotted line) methods for a range of shadow fading levels. 
It is expected that the quadratic programming generates lower costs than the LP one since the latter is an approximation of the former. 

\begin{figure}[!t]
	\centerline{\includegraphics[width=1\columnwidth]{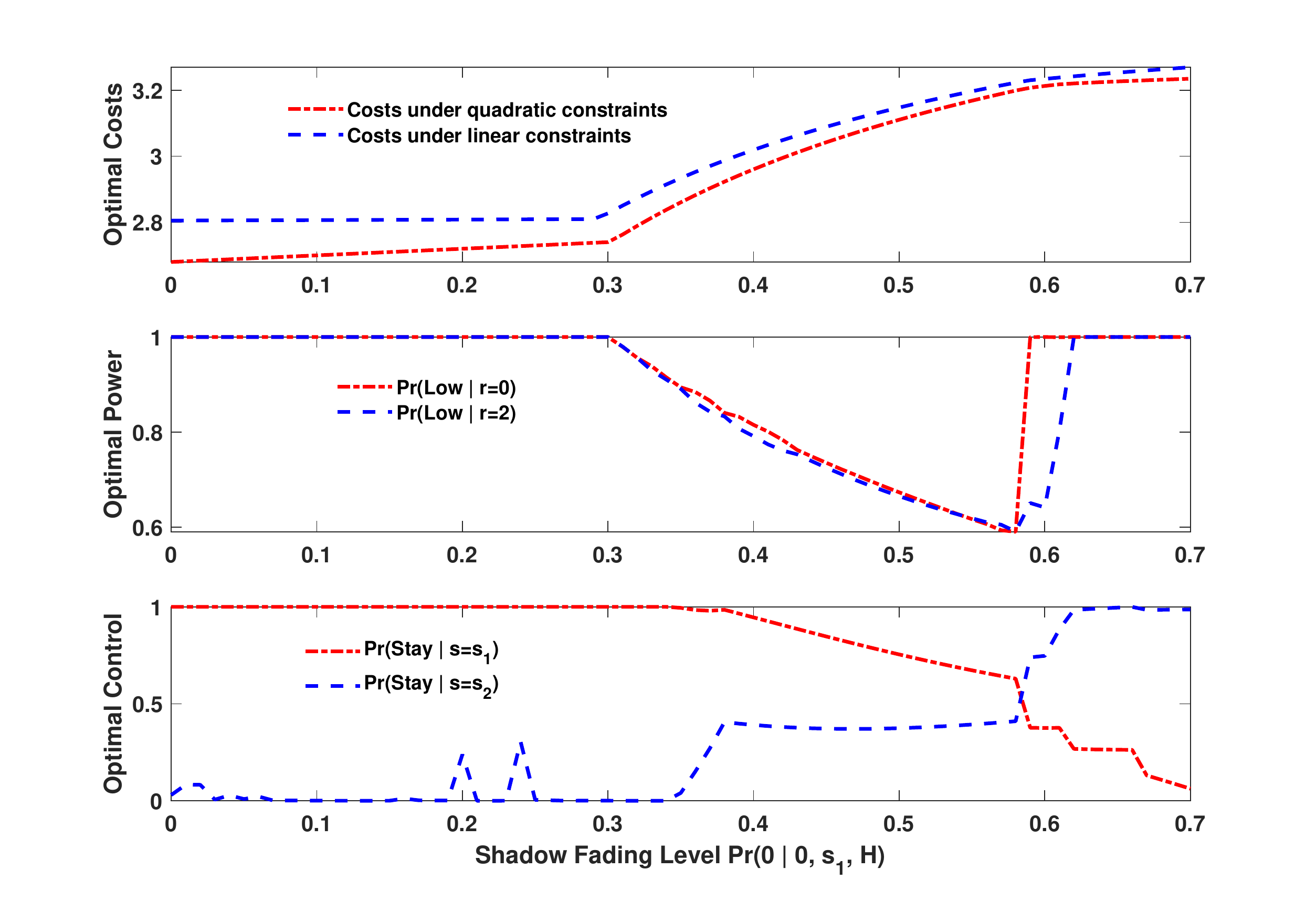}}
	\caption{Optimal costs and policies of the proposed co-design method.}
	\label{fig: optimal-costs-policy}
\end{figure}

As the shadow fading level~($\mathbb{P}(0~|~0, s_{1}, H)$) increases, the optimal costs are increased as a result of utilizing more system resources to ensure system stability under degraded channel conditions. Moreover, the co-design strategies adaptively adjust the optimal power and control policies when different fading levels occur.
The middle plots in Fig.~\ref{fig: optimal-costs-policy} show the adaptive changes of optimal power policies~($\mathbb{P}(\text{Low}~|~r=0)$ and $\mathbb{P}(\text{Low}~|~r=2)$) while the bottom plots present the optimal control policies of $\mathbb{P}(Stay~|~s_{1})$ and $\mathbb{P}(Stay~|~s_{2})$ under different fading levels.

When shadow fading is in the low level range and the value of $\mathbb{P}(0~|~0, s_{1}, H)$ ranges from $0$ to $0.3$, the optimal power policy is that of choosing the low power level with probability $1$, regardless of the data rates.
Correspondingly, the optimal control policy is to choose the "Stay" action if the current state is $s_1$ and take the "Go" action if the state is $s_{2}$~(i.e., $\mathbb{P}(Stay~|~s=s_{2}) \approx 0$ is shown as blue dashed line of the middle plots of Fig.~\ref{fig: optimal-costs-policy}). 
The underlying incentives of selecting these power and control policies lie in the fact that good channel conditions (i) allow the communication system to use a low transmission power to attain minimum use of communication resources while also (ii) enable the control system to adopt the strategy of staying in the state $s_1$, which is favored to optimize the performance of the MDP system. 
However, as the shadow fading level is increased from $0.3$ to $0.6$ and towards an extent at which the system stability constraints are about to be violated, the optimal strategies start to decrease the likelihood of using the low transmission power and push the MDP system away from the shadowing state $s_1$\footnote{The decrease on the probability of choosing the "Stay" action at state $s_1$, and the increase on the probability of choosing the "Stay" action at state $s_2$ imply that the MDP system is trying to avoid the shadowing state $s_1$ to alleviate the impact of shadow fading on the system stability}. 
Interestingly, when the fading level is approaching $0.6$, the optimal power policies start to jump back to use the low power level that saves communication resources, while the optimal control policies still push the MDP system away from the shadowing state $s_1$. 
These observations thus illustrate the adaptive and cooperative nature of the proposed co-design method which essentially adjusts the optimal control and communication policies for the benefits of whole system.

\section{Conclusions}
\label{sec:conclusion}
This paper has presented a co-design paradigm to ensure both the stability and good performance of industrial NCSs under \emph{state-dependent fading channels}. 
A novel SD-MC model was proposed to characterize the correlation between channel conditions and external environments. 
The proposed channel model was used to derive sufficient conditions on MATI that will ensure the \emph{asymptotically stable in expectation} and \emph{almost surely asymptotically stable} properties of the NCS. 
The derived stability conditions were then imposed as hard constraints in the co-design problem whose optimal solutions can be obtained from the solutions of a constrained polynomial optimization problem (CPOP) formulation.
This paper further demonstrated that such a CPOP can be either reformulated as a semidefinite programming problem (when a two-state Markovian channel is considered), or approximated as a linear programming problem with sub-optimal solution (when a more general Markovian channel is considered). 
Numerical simulation results using an industrial NCS model which consists of a batch reactor and an autonomous vehicle dynamic models were provided to demonstrate the benefits of the proposed co-design approach. 

\section{Appendices}
\label{appendix}
\begin{proof}[Proof of Theorem \ref{thm:ase}]
	Let $\overline{x}:=[x^{T}, \overline{e}, \tau, k]^{T}$ denote an augmented state, and $F(\overline{x}):=[\tilde{f}(x, e)^{T}, \overline{g}(x, \overline{e})^{T}, 1, 0]^{T}$. Define a candidate Lyapunov function for the augmented state $\overline{x}$ as below
	\begin{align}
	U(\overline{x}):=V(x)+ \zeta \phi(\tau) W^{2}(k, \overline{e})
	\end{align}
	where the function $\phi(\tau): [0, T_{MATI}] \rightarrow \mathbb{R}$ is the solution to the following nonlinear differential equation
	\begin{align}
	\dot{\phi}=-2L\phi - \zeta (\phi^{2}+1), \quad \phi(0)=\overline{\lambda}^{-1}
	\end{align}
	Following the proof of \cite[Theorem 1]{carnevale2007lyapunov}, for all $\tau$ and $k$, then
	\begin{align}
	\langle \nabla U(\overline{x}), F(\overline{x}) \rangle \leq - \varrho(|x|) - \varrho(W(k, \overline{e})) \leq -\tilde{\varrho}(\overline{x})
	\end{align}
	where $\tilde{\varrho}(\overline{x})$ is a continuous and positive definition function. It then follows that there exists a class $\mathcal{KL}$ function $\beta$ such that $\forall \lambda \in \{\lambda_{i}\}_{i=1}^{M_R}$,
	\begin{align}
	\label{ineq:U-continuous}
	U(\overline{x}(t)) \leq \beta\left(U(\overline{x}(t_{k}^{+})), t-t_{k}\right), \forall t \in [t_{k}, t_{k+1}], \forall k \in \mathbb{N}
	\end{align}
	Note that the randomness of the system comes from the stochastic jump equations in \eqref{sys:jump}, where the sequence of the augmented state $\{\overline{e}(t_{k}^{+})\}$ is a stochastic process whose probability distribution is governed by the SD-MC. 
	Let $\mathbbm{1}_{A}$ be an indicator function with value $1$ when the sample value falls in the set $A$ and value $0$ otherwise. 
	For a given set of data rates $\mathcal{R}=\{r_{i}\}_{i=1}^{M_R}$, let $U_{k+1}:=U(\overline{x}(t_{k+1}^{+}))$, and define a vector $\overline{U}_{k+1}:=\left[\mathbb{E}[U_{k+1}\mathbbm{1}_{r_{1}}], \ldots, \mathbb{E}[U_{k+1}\mathbbm{1}_{r_{i}}], \ldots, \mathbb{E}[U_{k+1}\mathbbm{1}_{r_{M_R}}]\right]^{T}$ where for all $ i, j$ we have that
	\begin{align}
	\mathbb{E}&[U_{k+1}\mathbbm{1}_{r_{i}}] \nonumber \\
	&=\mathbb{E}[V(x(t_{k+1}^{+}))\mathbbm{1}_{r_{i}}]+\zeta \phi(\tau^{+})\mathbb{E}[W^{2}(k+1, \overline{J}(k, \overline{e}, r))\mathbbm{1}_{r_{i}}] \nonumber \\
	&\overset{(a)}{=}V(x(t_{k+1})+\zeta \phi(0)\mathbb{E}[W^{2}(k+1, \overline{J}(k, \overline{e}, r))\mathbbm{1}_{r_{i}}] \nonumber \\
	&\overset{(b)}{\leq} V(x(t_{k+1})+\zeta \overline{\lambda}^{-1}\sum_{j=1}^{M_R}\mathbb{E}[\lambda_{i}^{2} W^{2}(k, \overline{e})\mathbbm{1}_{r_{j}}\mathbbm{1}_{r_{i}}] \nonumber \\
	&=V(x(t_{k+1})+\zeta \overline{\lambda}^{-1}\lambda_{i}^{2}\sum_{j=1}^{M_R}\mathbb{P}(r_{i} \vert r_{j})\mathbb{E}[ W^{2}(k, \overline{e})\mathbbm{1}_{r_{j}}] \nonumber \\
	&\overset{(c)}{=}V(x(t_{k+1})+\zeta \overline{\lambda}^{-1}\lambda_{i}^{2}\sum_{j=1}^{M_R}\underbrace{\sum_{s \in S, p \in \Omega_{p}}\mathbb{P}(r_{i} \vert r_{j}, s, p)\mathbb{P}(s, p | r_{j})}_{\overline{P}_{ij}(\mu)} \nonumber \\
	&\times \mathbb{E}[ W^{2}(k, \overline{e})\mathbbm{1}_{r_{j}}]
	\label{ineq:EU}
	\end{align}
	The equality $(a)$ in \eqref{ineq:U-continuous} holds due to the fact that the state $x$ is continuously evolved and is independent of the random jump caused by the Markov channel and MDP process. 
	The inequality $(b)$ holds because of the conditions \eqref{ineq:W} in Assumption \ref{assumption}. 
	The equality $(c)$ holds due to the used SD-MC model defined in \eqref{sys:ss-MC} and the MDP process. 
	Since \eqref{ineq:EU} holds for all $r_{i}$ with $ i=1,2,\ldots, M_R$ and $j=1,2,\ldots, M_S$, then the vector $\overline{U}_{k+1}$ satisfies 
	\begin{align}
	&\overline{U}_{k+1}\leq V(x(t_{k+1}))\mathbf{e}_{M_{R}} \nonumber \\
	&+ \zeta \overline{\lambda}^{-1}\underbrace{\begin{bmatrix}
		\lambda_{1}^{2}\overline{P}_{11} &  \lambda_{1}^{2}\overline{P}_{12} & \cdots & \lambda_{1}^{2}\overline{P}_{1M_{R}} \\
		\vdots & \vdots  &\vdots & \vdots \\
		\lambda_{i}^{2}\overline{P}_{i1}&  \lambda_{i}^{2}\overline{P}_{i2} & \cdots & \lambda_{i}^{2}\overline{P}_{iM_{R}} \\
		\vdots & \vdots & \vdots & \vdots \\
		\lambda_{M_R}^{2}\overline{P}_{M_{R}1}   & \lambda_{M_R}^{2}\overline{P}_{M_{R}2}  & \cdots & \lambda_{M_R}^{2}\overline{P}_{M_{R}M_{R}}
		\end{bmatrix}}_{\diag(\lambda_{i}^{2})\overline{P}(\mu)} \nonumber \\ 
	&\times \underbrace{\begin{bmatrix}
		\mathbb{E}[ W^{2}(k, \overline{e})\mathbbm{1}_{r_{1}}] \\
		\mathbb{E}[ W^{2}(k, \overline{e})\mathbbm{1}_{r_{2}}] \\
		\vdots \\
		\mathbb{E}[ W^{2}(k, \overline{e})\mathbbm{1}_{r_{M_{R}}}]
		\end{bmatrix}}_{\overline{W^{2}}(k, \overline{e})} \label{ineq:bar-U}
	\end{align} 
	where $\mathbf{e}_{M_RS}:=[1, 1, \ldots, 1]^{T}$ is a column vector with $M_R$ of ones, and the transition matrix $\overline{P}(\mu)$ is a function of the joint policy $ \mu=(\mu_{m}, \mu_{p})$. 
	Specifically, $\forall i, j$, then 
$\overline{P}_{ij}(\mu)=\sum_{s \in S, p \in \Omega_{p}}\mathbb{P}(r_{i} \vert r_{j}, s, p)\mathbb{P}(s, p | r_{j})$ where the conditional probability $\mathbb{P}(s, p | r_{j})$ can be viewed as a joint policy which specifies the likelihood of selecting the MDP state $s \in S$ and transmission power $p \in \Omega_{p}$ when the channel date rate is $r_{j} \in \mathcal{R}$. 
By taking the infinity norm of both sides of the inequality in \eqref{ineq:bar-U}, one further has
	\begin{align}
	&|\overline{U}_{k+1}| \nonumber \\&\overset{(d)}{\leq} V(x(t_{k+1})) +\zeta \overline{\lambda}^{-1}\|\diag(\lambda_{i}^{2}\mathbb{I}_{M_S})\overline{P}^{T}(\mu^{m}, \mu^{p})\||\overline{W^{2}}(k, \overline{e})| \nonumber \\
	& \overset{(e)}{\leq} V(x(t_{k+1}))+\zeta \overline{\lambda}^{-1} \overline{\lambda}^{2}|\overline{W^{2}}(k, \overline{e})| \nonumber \\
	& = V(x(t_{k+1}))+\zeta  \overline{\lambda}|\overline{W^{2}}(k, \overline{e})|\nonumber \\
	& \overset{(f)}{\leq} |\overline{U}_{k}|
	\label{ineq: U}
	\end{align}
	The inequality $(d)$ in \eqref{ineq: U} holds due to the norm condition  $|y| \leq |Ax| \leq \|A\||x|$, $\forall x \in \mathbb{R}^{n}, y \in \mathbb{R}^{m}$ and $A \in \mathbb{R}^{n \times m}$. The inequality $(e)$ holds due to the condition in \eqref{ineq:condition-lambda}. 
	The inequality $(f)$ holds by the claim that $|\overline{U}(\overline{x})|=V(x)+\zeta \phi(\tau)|\overline{W^{2}}|$. 
	To prove this claim, note that the expectations of the positive functions $U(\overline{x})$ and $W(k, \overline{e})$ are positive for all $\overline{x}, \overline{e}$. 
	Since $\overline{U}(\overline{x})=V(x)\mathbf{e}+\zeta \phi(\tau)\overline{W^{2}}(k, \overline{e})$, one may proves the claim by noting that $|\overline{U}(\overline{x})|=|V(x)\mathbf{e}+\zeta \phi(\tau)\overline{W^{2}}(k, \overline{e})|=V(x)+\zeta \phi(\tau)|\overline{W^{2}}(k, \overline{e})|$, because $V(x), \zeta, \phi(\tau) \geq 0$ is constant with respect to the expectation operator. 
	The condition in \eqref{ineq: U} thus ensures that the expected value of the Lyapunov function is non-increasing at the stochastic jump states. 
	By \eqref{ineq:U-continuous}, we prove that the Lyapunov function $U(\overline{x})$ asymptotically decreases deterministically in-between the stochastic jumps. 
	Combining inequalities \eqref{ineq:U-continuous} and \eqref{ineq: U}, it is then straightforward to prove that there exists a class $\mathcal{KL}$ function $\tilde{\beta}$ such that 
	\begin{align}
	\mathbb{E}[U(\overline{x}(t))] \leq \tilde{\beta}(U(\overline{x}(0)), t)
	\end{align}
	This thus completes the proof. 
\end{proof}

\begin{proof}[Proof of Theorem \ref{thm:asas}]
	The idea of the proof is to first show that the stochastic hybrid system is \emph{exponentially stable in expectation} (ESE) under the more restrictive condition in \eqref{ineq:new}. 
	By Borel-Cantelli Lemma, it can be shown that \emph{exponential stability in expectation} implies ASAS as defined in \eqref{ineq:ASAS} in Definition \ref{def: ss}. 
	To show the ESE property, consider the Lyapunov function $U(\overline{x}):=V(x)+ \zeta \phi(\tau) W^{2}(k, \overline{e})$.
	The following inequality holds for each transmission time interval, i.e., $\forall \tau \in [0, T_{MATI}]$:
	\begin{align*}
	\langle \nabla U(\overline{x}), F(\overline{x}) \rangle &\leq - \varrho |x|^{2} - \varrho W(k, \overline{e}) 
	\leq - \varrho |x|^{2} - \varrho \overline{\alpha}_{W} |\overline{e}|^2 \\
	& \leq -(\varrho+\varrho \overline{\alpha}_{W})*(|x|^{2}+|\overline{e}|^2) = -\tilde{\varrho} |\overline{x}|^{2}
	\end{align*}
	with $\tilde{\varrho}=\varrho+\varrho \overline{\alpha}_{W}$. 
	By the conditions in \eqref{ineq:new}, it is straightforward to show that $\underline{\alpha}_{U} |\overline{x}|^{2} \leq U(\overline{x}) \leq \overline{\alpha}_{U} |\overline{x}|^{2}$ for positive constants $\underline{\alpha}_{U}=\underline{\alpha}_{V}+\zeta \overline{\lambda}	\underline{\alpha}_{W}$ and $ \overline{\alpha}_{U}=\overline{\alpha}_{V}+\zeta \overline{\lambda}^{-1} \overline{\alpha}_{W}$ where $\overline{\lambda}$ is defined in \eqref{ineq:condition-lambda}. 
	Then, one can show that $U(\overline{x}(t, k)) \leq \exp\left(-\frac{\tilde{\varrho} }{\overline{\alpha}_{U}}(t-t_k)\right) U(\overline{x}(t_k, k)), \forall k \in \mathbb{N}, t-t_k \in [0, T_{MATI}]$. Moreover, Theorem \ref{thm:ase} shows that $\mathbb{E}[U(\overline{x}(t, k+1))] \leq \mathbb{E}[U(\overline{x}(t, k))], \forall k \in \mathbb{N}$ if the condition in \eqref{ineq:condition-lambda} holds. 
	By combining the dynamics of both the continuous and discrete flows of the stochastic hybrid system as proved above, one may shows that $\mathbb{E}[U(\overline{x}(t, k))] \leq \exp\left(-\frac{\tilde{\varrho} }{\overline{\alpha}_{U}} t \right) U(\overline{x}(0, 0))$ which implies that $\mathbb{E}[|\overline{x}(t, k)|^{2}] \leq \exp\left(-\frac{\tilde{\varrho} }{\overline{\alpha}_{U}} t \right) \frac{\overline{\alpha}_{U}}{\underline{\alpha}_{U}} |\overline{x}(0,0)|^{2}$. 
	Following the proof of Theorem 9 in \cite{hu2019co}, one can shows that ESE implies ASAS by using the Borel-Cantelli Lemma. 
	Please refer to \cite{hu2019co, hu2019optimal} for more details of the proof. 
\end{proof}

\begin{proof}[Proof of Theorem \ref{thm:qcqp}]
	The proof is based on the occupation method used in \cite{altman1999constrained}. 
	For any stationary control and power policy, let $X(r, s, p):=\mathbb{P}(r, s, p)$ denotes the corresponding stationary probability distribution of the states $s, r$ and the action $p$ over the set $S \times \mathcal{R} \times \Omega_{p}$. 
	Under the stationary policy space, the objective function can be equivalently represented by \eqref{opt: obj} due to the following conditions
	\begin{align*}
	&\lim_{T \rightarrow \infty}\frac{1}{T}\mathbb{E}\sum_{i=0}^{T}c(s_k, p_k, R_k) = \lim_{T \rightarrow \infty}\frac{1}{T}\sum_{i=0}^{T} \mathbb{E}[c(s_k, p_k, R_k)] \\
	=& \lim_{T \rightarrow \infty}\frac{1}{T}\sum_{i=0}^{T} \sum_{s \in S, r \in \mathcal{R}, p \in \Omega_{p}} c(s, r, p)\mathbb{P}(s_k=s, p_k=p, R_{k}=r) 
	\end{align*}
	\begin{align*}
	=&\sum_{s \in S, r \in \mathcal{R}, p \in \Omega_{p}} \bigg[ \underbrace{\lim_{T \rightarrow \infty}\frac{1}{T}\sum_{i=0}^{T}\mathbb{P}(s_k=s, p_k=p, R_{k}=r)}_{X(s, p, r)} c(s, p, r)
	\bigg] \\
	=& \sum_{s \in S, r \in \mathcal{R}, p \in \Omega_{p}} X(s, p, r)c(s, p, r).
	\end{align*}
The conditions in \eqref{opt: mdp2} and \eqref{opt: occupation} are there to ensure the Markov property of the channel model in \eqref{sys:ss-MC} under the stationary policy as well as the probability definition of $\{X(r, s, p)\}$. 
To prove the equivalence between the constraint in \eqref{opt: stability} and the stability condition in \eqref{ineq:condition-lambda}, consider that $\forall 1 \leq i \leq M_R$:
\begin{align}
&\|\diag(\lambda_{i}^{2})\overline{P}\| \leq \overline{\lambda}^{2}  \overset{(a)}{\Longleftrightarrow}\lambda_{i}^{2} \sum_{j=1}^{M_R}\overline{P}_{ij} \leq \overline{\lambda}^{2},  \nonumber \\
&\overset{(b)}{\Longleftrightarrow} \sum_{j=1}^{M_R} \sum_{s, p}P_{ij}(s, p)\mathbb{P}(s, p~|~r_j) \leq \frac{\overline{\lambda}^{2}}{\lambda_{i}^{2}} \triangleq \theta_{i}^{2},  \label{eq: s-c}
\end{align}
where $P_{ij}(s, p)$ is the conditional probability defined in SD-MC model \eqref{sys:ss-MC}, and $\lambda_{i}, \overline{\lambda}$ are parameters defined in Assumption \ref{assumption}. 
By Bayes's Law, the conditional probability $\mathbb{P}(s, p~|~r_j)$ can be rewritten  in terms of decision variables $\{X(s, p, r)\}$.
\begin{align}
\mathbb{P}(s, p~|~r_j) &= \frac{\mathbb{P}(s, p)}{\mathbb{P}(r_j)} = \frac{\mathbb{P}(s, p)}{\sum_{s, p}\mathbb{P}(s, p, r_j)} \nonumber \\
&= \frac{X(s, p)}{\sum_{s, p}X(s, p, r_j)} = \frac{\sum_{r_j}X(s, p, r_j)}{\sum_{s, p}X(s, p, r_j)} \label{eq: p-x}
\end{align}
By replacing $\mathbb{P}(s, p~|~r_j)$ in \eqref{eq: s-c} with its representation in \eqref{eq: p-x}, the constraint \eqref{eq: s-c} can be rewritten $\forall 1 \leq i \leq M_R$ as  
\begin{align}
	\sum_{j=1}^{M_R} \sum_{s, p}P_{ij}(s, p)\frac{\sum_{r_j}X(s, p, r_j)}{\sum_{s, p}X(s, p, r_j)} \leq \theta_{i}^{2}, \quad  \label{ineq: X}
\end{align}
Let $X(r_j):=\sum_{s, p}X(s, p, r_j) > 0$ and $X(s, p):=\sum_{r_j}X(s, p, r_j)$, then multiplying $\prod_{j=1}^{M_R}X(r_j) > 0$ on both sides of inequality \eqref{ineq: X} leads to 
\begin{align*}
	&\sum_{j=1}^{M_R} \sum_{s, p}P_{ij}(s, p)\frac{X(s, p)}{X(r_j)}  \prod_{j=1}^{M_R}X(r_j) \leq \theta_{i}^{2} \prod_{j=1}^{M_R}X(r_j), \forall i \\
	&\Longleftrightarrow \sum_{j=1}^{M_R}\sum_{s, p}P_{ij}(s, p)X(s, p)\prod_{\ell \neq j}X(r_{\ell}) \leq \theta_{i}^{2} \prod_{j=1}^{M_R}X(r_j), \forall i.
\end{align*}
This thus show that the solutions of CPOP \eqref{opt: qcqp} is equivalent to those of the original optimization  in Problem \ref{problem-efficiency}. 
\end{proof}

\begin{proof}[Proof of Lemma \ref{lemma:qcp}]
Consider the polynomial constraint in \eqref{opt: stability} with a two-state SD-MC (i.e., $M_R = 2$).
Then the polynomial inequality \eqref{opt: stability} is reduced to
\begin{align}
	&\sum_{j=1}^{2}\sum_{s, p}P_{ij}(s, p)X(s, p)\prod_{\ell \neq j}X(r_{\ell}) \leq \theta_{i}^{2} \prod_{j=1}^{2}X(r_j), \forall i=1, 2. \nonumber \\
	&\Longleftrightarrow \sum_{s, p}P_{i1}(s, p)X(s, p)X(r_{2}) + \sum_{s, p}P_{i2}(s, p)X(s, p)X(r_{1}) \nonumber\\
	 \leq & \theta_{i}^{2}X(r_1)X(r_2), \forall i=1, 2. \label{ineq:q-term}
\end{align}
It is clear that the order of the polynomial terms in the inequality \eqref{ineq:q-term} is $2$, which makes it a quadratic constraint. 
Let $X=[X(r_1, s_1, p_1), \ldots, X(r_2, s_1, p_1), \ldots, X(r_2, s_{M_s}, p_{M_p})]^{T}$ denotes a column vector with $M_s=|S|$ and $M_p=|\Omega_{p}|$ being the sizes of the MDP state set and transmission power set, respectively.
Let $I_{1}=[\mathbf{e}, \mathbf{0}]$, $I_{2}=[\mathbf{0},\mathbf{e}]$ with $\mathbf{0}_{1\times M_{s}M_{p}}$ and $\mathbf{e}_{1\times M_{s}M_{p}}$ are the row vectors of $M_{s}M_{p}$ zeros and ones, respectively. 
Let $\vec{P}_{ij}=Vec(P_{ij})=[P_{ij}(s_{1}, p_{1}), P_{ij}(s_{2}, p_{1}), \ldots, P_{ij}(s_{M_s}, p_{1}), \ldots, P_{ij}(s_{M_s}, p_{M_p})]$ denotes the vectorization of the matrix $P_{ij}, \forall i,j=1,2$. Then, with $X(r_1)=I_{1}X$ and $X(r_2)=I_{2}X$, the quadratic constraint in \eqref{ineq:q-term} can be rewritten as 
\begin{align*}
	&X^{T}\begin{bmatrix}
		\vec{P}_{i1}^{T} \\
		\vec{P}_{i1}^{T}
	\end{bmatrix}I_{2}X + x^{T}\begin{bmatrix}
	\vec{P}_{i2}^{T} \\
	\vec{P}_{i2}^{T}
\end{bmatrix}I_{1}X - \theta_{i}^{2}I_{1}XI_{2}X \leq 0, \forall i=1, 2 \\
&\Longleftrightarrow X^{T}\bigg(\underbrace{\begin{bmatrix}
	\vec{P}_{i1}^{T} \\
	\vec{P}_{i1}^{T}
\end{bmatrix}I_{2} +\begin{bmatrix}
\vec{P}_{i2}^{T} \\
\vec{P}_{i2}^{T}
\end{bmatrix}I_{1}- \theta_{i}^{2}I_{1}^{T}I_{2}}_{\overline{Q}_{i}}\bigg)X \leq 0, \forall i=1, 2
\end{align*}
The proof is then complete. 
\end{proof}

\begin{proof}[Proof of Proposition \ref{prop:lpr}]
	Properties 1) and 3) hold due to the fact that for any joint-policies $\mu $ that satisfy the $1$-norm constraint in \eqref{ineq:norm-1}, they guarantee that the $\infty$-norm stability condition \eqref{ineq:condition-lambda} is satisfied. 
	This implies that the feasible set of the joint policies $\mathcal{H}_{\tilde{\mu}}$ induced by the $1$-norm stability constraint in \eqref{ineq:norm-1} is strictly smaller than the one $\mathcal{H}_{\mu}$ generated by the $\infty$-norm condition in \eqref{ineq:condition-lambda}. 
	Therefore, the optimal costs $J^{*}$ under the set $\mathcal{H}_{\mu}$ are clearly smaller than the optimal costs under the set $\mathcal{H}_{\tilde{\mu}}$. 
	To prove  property 2), let us consider the $1$-norm condition as follows.
	\begin{align}
		&\overline{\lambda} > \sqrt{M_R\|\diag(\lambda_{i}^{2})\overline{P}(\mu)\|_{1}} \nonumber\\
		\Longleftrightarrow & \|\diag(\lambda_{i}^{2})\overline{P}(\mu)\|_{1} < \bigg(\frac{\overline{\lambda}}{M_R}\bigg)^{2} \nonumber\\
		\Longleftrightarrow & \sum_{i=1}^{M_R}\lambda_{i}^{2}\sum_{s, p}P_{ij}(s, p)\mathbb{P}(s, p~|~r_j) < \bigg(\frac{\overline{\lambda}}{M_R}\bigg)^{2}, \forall j \label{ineq:norm-1-new}
	\end{align}
By \eqref{eq: p-x}, the inequality in \eqref{ineq:norm-1-new} may then be rewritten as 
\begin{align}
\sum_{i=1}^{M_R}\lambda_{i}^{2}\sum_{s, p}P_{ij}(s, p) \frac{\sum_{r_j}X(s, p, r_j)}{\sum_{s, p}X(s, p, r_j)} <  \bigg(\frac{\overline{\lambda}}{M_R}\bigg)^{2}, \forall j \label{ineq:1-norm-x}
\end{align}
Multiplying $\sum_{s, p}X(s, p, r_j)$ on both sides of the inequality in \eqref{ineq:1-norm-x} implies for all $j$ that
\begin{align*}
	\sum_{i=1}^{M_R}\lambda_{i}^{2}\sum_{s, p}P_{ij}(s, p)\sum_{r_j}X(s, p, r_j) < \sum_{s, p}X(s, p, r_j)\bigg(\frac{\overline{\lambda}}{M_R}\bigg)^{2}
\end{align*}
It is clear from the above inequality that the $1$-norm stability conditions in \eqref{ineq:norm-1} are thus linear constraints. 
Hence, the optimization problem formulated in \eqref{opt: game-1} is a LP under the relaxed $1$-norm stability conditions. The proof is complete. 
\end{proof}
\balance
\bibliographystyle{IEEEtran}       
\bibliography{bibfile/TAC2020} 

\begin{thebibliography}{10}
\providecommand{\url}[1]{#1}
\csname url@samestyle\endcsname
\providecommand{\newblock}{\relax}
\providecommand{\bibinfo}[2]{#2}
\providecommand{\BIBentrySTDinterwordspacing}{\spaceskip=0pt\relax}
\providecommand{\BIBentryALTinterwordstretchfactor}{4}
\providecommand{\BIBentryALTinterwordspacing}{\spaceskip=\fontdimen2\font plus
\BIBentryALTinterwordstretchfactor\fontdimen3\font minus
  \fontdimen4\font\relax}
\providecommand{\BIBforeignlanguage}[2]{{%
\expandafter\ifx\csname l@#1\endcsname\relax
\typeout{** WARNING: IEEEtran.bst: No hyphenation pattern has been}%
\typeout{** loaded for the language `#1'. Using the pattern for}%
\typeout{** the default language instead.}%
\else
\language=\csname l@#1\endcsname
\fi
#2}}
\providecommand{\BIBdecl}{\relax}
\BIBdecl

\bibitem{wang2016implementing}
S.~Wang, J.~Wan, D.~Li, and C.~Zhang, ``Implementing smart factory of industrie
  4.0: an outlook,'' \emph{Int. J. Distrib. Sens. N.}, vol.~12, no.~1, p.
  3159805, 2016.

\bibitem{ahlen2019toward}
A.~Ahl{\'e}n, J.~Akerberg, M.~Eriksson, A.~J. Isaksson, T.~Iwaki, K.~H.
  Johansson, S.~Knorn, T.~Lindh, and H.~Sandberg, ``Toward wireless control in
  industrial process automation: A case study at a paper mill,'' \emph{IEEE
  Control Syst. Mag.}, vol.~39, no.~5, pp. 36--57, 2019.

\bibitem{agrawal2014long}
P.~Agrawal, A.~Ahl{\'e}n, T.~Olofsson, and M.~Gidlund, ``Long term channel
  characterization for energy efficient transmission in industrial
  environments,'' \emph{IEEE Trans. Commun.}, vol.~62, no.~8, pp. 3004--3014,
  2014.

\bibitem{qin2018link}
F.~Qin \emph{et~al.}, ``Link quality estimation in industrial temporal fading
  channel with augmented kalman filter,'' \emph{IEEE Trans. Industr. Inform.},
  vol.~15, no.~4, pp. 1936--1946, 2018.

\bibitem{quevedo2012state}
D.~E. Quevedo, A.~Ahlen, and K.~H. Johansson, ``State estimation over sensor
  networks with correlated wireless fading channels,'' \emph{IEEE Trans. Autom.
  Control}, vol.~58, no.~3, pp. 581--593, 2012.

\bibitem{kharb2019survey}
S.~Kharb and A.~Singhrova, ``A survey on network formation and scheduling
  algorithms for time slotted channel hopping in industrial networks,''
  \emph{J. Netw. Comput. Appl.}, vol. 126, pp. 59--87, 2019.

\bibitem{hermeto2017scheduling}
R.~T. Hermeto, A.~Gallais, and F.~Theoleyre, ``Scheduling for ieee802.
  15.4-tsch and slow channel hopping mac in low power industrial wireless
  networks: A survey,'' \emph{Comput. Commun.}, vol. 114, pp. 84--105, 2017.

\bibitem{queiroz2017survey}
D.~V. Queiroz, M.~Alencar, R.~Gomes, I.~Fonseca, and C.~Benavente-Peces,
  ``Survey \& systematic mapping of industrial wireless sensor networks,''
  \emph{J. Netw. Comput. Appl.}, vol.~97, pp. 96--125, 2017.

\bibitem{quevedo2013power}
D.~E. Quevedo, J.~{\O}stergaard, and A.~Ahlen, ``Power control and coding
  formulation for state estimation with wireless sensors,'' \emph{IEEE Trans.
  Control Syst. Technol.}, vol.~22, no.~2, pp. 413--427, 2013.

\bibitem{lun2020impact}
Y.~Z. Lun, C.~Rinaldi, A.~Alrish, A.~D'Innocenzo, and F.~Santucci, ``On the
  impact of accurate radio link modeling on the performance of wirelesshart
  control networks,'' in \emph{Proc. IEEE INFOCOM}, 2020, pp. 2430--2439.

\bibitem{hu2019co}
B.~Hu, Y.~Wang, P.~V. Orlik, T.~Koike-Akino, and J.~Guo, ``Co-design of safe
  and efficient networked control systems in factory automation with
  state-dependent wireless fading channels,'' \emph{Automatica}, vol. 105, pp.
  334--346, 2019.

\bibitem{olofsson2016modeling}
T.~Olofsson, A.~Ahlen, and M.~Gidlund, ``Modeling of the fading statistics of
  wireless sensor network channels in industrial environments,'' \emph{IEEE
  Trans. Signal Process.}, vol.~64, no.~12, pp. 3021--3034, 2016.

\bibitem{eriksson2016long}
M.~Eriksson and T.~Olofsson, ``On long-term statistical dependences in channel
  gains for fixed wireless links in factories,'' \emph{IEEE Trans. Commun.},
  vol.~64, no.~7, pp. 3078--3091, 2016.

\bibitem{vinogradov2015measurement}
E.~Vinogradov, W.~Joseph, and C.~Oestges, ``Measurement-based modeling of
  time-variant fading statistics in indoor peer-to-peer scenarios,'' \emph{IEEE
  Trans. Antennas Propag.}, vol.~63, no.~5, pp. 2252--2263, 2015.

\bibitem{goldsmith1997variable}
A.~J. Goldsmith and S.-G. Chua, ``Variable-rate variable-power {MQAM} for
  fading channels,'' \emph{IEEE Trans. Commun.}, vol.~45, no.~10, pp.
  1218--1230, 1997.

\bibitem{ma2020efficient}
Y.~Ma, C.~Lu, and Y.~Wang, ``Efficient holistic control: Self-awareness across
  controllers and wireless networks,'' \emph{ACM Transactions on Cyber-Physical
  Systems}, vol.~4, no.~4, pp. 1--27, 2020.

\bibitem{zhang2018denial}
H.~Zhang and W.~X. Zheng, ``Denial-of-service power dispatch against linear
  quadratic control via a fading channel,'' \emph{IEEE Trans. Autom. Control},
  vol.~63, no.~9, pp. 3032--3039, 2018.

\bibitem{rabi2016separated}
M.~Rabi, C.~Ramesh, and K.~H. Johansson, ``Separated design of encoder and
  controller for networked linear quadratic optimal control,'' \emph{SIAM J.
  Control Optim.}, vol.~54, no.~2, pp. 662--689, 2016.

\bibitem{di2019codesign}
G.~Di~Girolamo and A.~D'Innocenzo, ``Codesign of controller, routing \&
  scheduling in wirelesshart networked control systems,'' \emph{Int. J. Robust
  Nonlinear Control}, vol.~29, no.~7, pp. 2171--2187, 2019.

\bibitem{peters2016controller}
E.~G. Peters, D.~E. Quevedo, and M.~Fu, ``Controller and scheduler codesign for
  feedback control over ieee 802.15. 4 networks,'' \emph{IEEE Trans. Control
  Syst. Technol.}, vol.~24, no.~6, 2016.

\bibitem{zhang2006communication}
L.~Zhang and D.~Hristu-Varsakelis, ``Communication and control co-design for
  networked control systems,'' \emph{Automatica}, vol.~42, no.~6, pp. 953--958,
  2006.

\bibitem{peng2013event}
C.~Peng and T.~C. Yang, ``Event-triggered communication and
  $\mathcal{H}_{\infty}$ control co-design for networked control systems,''
  \emph{Automatica}, vol.~49, no.~5, pp. 1326--1332, 2013.

\bibitem{zhao2018toward}
G.~Zhao, M.~A. Imran, Z.~Pang, Z.~Chen, and L.~Li, ``Toward real-time control
  in future wireless networks: Communication-control co-design,'' \emph{IEEE
  Commun. Mag.}, vol.~57, no.~2, pp. 138--144, 2018.

\bibitem{gatsis2014optimal}
K.~Gatsis, A.~Ribeiro, and G.~J. Pappas, ``Optimal power management in wireless
  control systems,'' \emph{IEEE Trans. Autom. Control}, vol.~59, no.~6, pp.
  1495--1510, 2014.

\bibitem{ren2017infinite}
X.~Ren, J.~Wu, K.~H. Johansson, G.~Shi, and L.~Shi, ``Infinite horizon optimal
  transmission power control for remote state estimation over fading
  channels,'' \emph{IEEE Trans. Autom. Control}, vol.~63, no.~1, pp. 85--100,
  2017.

\bibitem{leong2015kalman}
A.~S. Leong and D.~E. Quevedo, ``Kalman filtering with relays over wireless
  fading channels,'' \emph{IEEE Trans. Autom. Control}, vol.~61, no.~6, pp.
  1643--1648, 2015.

\bibitem{chakravorty2019remote}
J.~Chakravorty and A.~Mahajan, ``Remote estimation over a packet-drop channel
  with markovian state,'' \emph{IEEE Trans. Autom. Control}, vol.~65, no.~5,
  pp. 2016--2031, 2019.

\bibitem{dolz2017co}
D.~Dolz, D.~Quevedo, I.~Pe{\~n}arrocha, A.~Leong, and R.~Sanchis, ``Co-design
  of jump estimators \& transmission policies for wireless multi-hop networks
  with fading channels,'' \emph{Automatica}, vol.~81, pp. 68--74, 2017.

\bibitem{qi2016optimal}
Y.~Qi, P.~Cheng, and J.~Chen, ``Optimal sensor data scheduling for remote
  estimation over a time-varying channel,'' \emph{IEEE Trans. Autom. Control},
  vol.~62, no.~9, pp. 4611--4617, 2016.

\bibitem{zhang2016resilient}
L.~Zhang, Y.~Zhu, Z.~Ning, and X.~Yin, ``Resilient estimation for networked
  systems with variable communication capability,'' \emph{IEEE Trans. Autom.
  Control}, vol.~61, no.~12, pp. 4150--4156, 2016.

\bibitem{hu2020acc}
B.~Hu and T.~A. Tamba, ``Optimal networked control systems with state-dependent
  markov channels,'' submitted to 2021 American Control Conference.

\bibitem{nesic2004input}
D.~Nesic and A.~R. Teel, ``Input-output stability properties of networked
  control systems,'' \emph{IEEE Trans. Autom. Control}, vol.~49, no.~10, pp.
  1650--1667, 2004.

\bibitem{nesic2009explicit}
D.~Nesic, A.~R. Teel, and D.~Carnevale, ``Explicit computation of the sampling
  period in emulation of controllers for nonlinear sampled-data systems,''
  \emph{IEEE Trans. Autom. Control}, vol.~54, no.~3, pp. 619--624, 2009.

\bibitem{heijmans2017computing}
S.~H. Heijmans, R.~Postoyan, D.~Ne{\v{s}}i{\'c}, and W.~M.~H. Heemels,
  ``Computing minimal and maximal allowable transmission intervals for
  networked control systems using the hybrid systems approach,'' \emph{IEEE
  Control Syst. Lett.}, vol.~1, no.~1, pp. 56--61, 2017.

\bibitem{nesic2009unified}
D.~Nesic and D.~Liberzon, ``A unified framework for design and analysis of
  networked and quantized control systems,'' \emph{IEEE Trans. Autom. Control},
  vol.~54, no.~4, pp. 732--747, 2009.

\bibitem{liberzon2005stabilization}
D.~Liberzon and J.~P. Hespanha, ``Stabilization of nonlinear systems with
  limited information feedback,'' \emph{IEEE Trans. Autom. Control}, vol.~50,
  no.~6, pp. 910--915, 2005.

\bibitem{minero2012stabilization}
P.~Minero, L.~Coviello, and M.~Franceschetti, ``Stabilization over markov
  feedback channels: the general case,'' \emph{IEEE Trans. Autom. Control},
  vol.~58, no.~2, pp. 349--362, 2012.

\bibitem{zhang1999finite}
Q.~Zhang and S.~A. Kassam, ``Finite-state markov model for rayleigh fading
  channels,'' \emph{IEEE Trans. Commun.}, vol.~47, no.~11, pp. 1688--1692,
  1999.

\bibitem{khasminskii2011stochastic}
R.~Khasminskii, \emph{Stochastic Stability of Differential Equations}.\hskip
  1em plus 0.5em minus 0.4em\relax Springer Science \& Business Media, 2011,
  vol.~66.

\bibitem{hu2019optimal}
B.~Hu and T.~A. Tamba, ``Optimal codesign of industrial networked control
  systems with state-dependent correlated fading channels,'' \emph{I. J. Robust
  Nonlinear Control}, vol.~29, no.~13, pp. 4472--4493, 2019.

\bibitem{carnevale2007lyapunov}
D.~Carnevale, A.~R. Teel, and D.~Nesic, ``A {L}yapunov proof of an improved
  maximum allowable transfer interval for networked control systems,''
  \emph{IEEE Trans. Autom. Control}, vol.~52, no.~5, pp. 892--897, 2007.

\bibitem{altman1999constrained}
E.~Altman, \emph{Constrained Markov Decision Processes}.\hskip 1em plus 0.5em
  minus 0.4em\relax CRC Press, 1999, vol.~7.

\bibitem{ahmadi2019complexity}
A.~A. Ahmadi and G.~Hall, ``On the complexity of detecting convexity over a
  box,'' \emph{Math. Program.}, pp. 1--15, 2019.

\bibitem{blekherman2012semidefinite}
G.~Blekherman, P.~A. Parrilo, and R.~R. Thomas, \emph{Semidefinite Optimization
  and Convex Algebraic Geometry}.\hskip 1em plus 0.5em minus 0.4em\relax SIAM,
  2012.

\end{thebibliography}

\begin{IEEEbiography}{Bin Hu}
Bin Hu received the M.S. degree in control and system engineering from Zhejiang University, Hangzhou, China, in 2010 and the Ph.D. degree from The University of Notre Dame, Notre Dame, IN, USA, in 2016. He is currently an Assistant Professor in the department of engineering technology at Old Dominion University, Norfolk, VA, USA. His research interests include networked control systems, cybersecurity, distributed control and optimization, and human machine interaction.
\end{IEEEbiography}

\begin{IEEEbiography}{Tua A. Tamba}
Tua A. Tamba received the B.Eng. in   Engineering Physics from Institut Teknologi Bandung, Indonesia in 2006, the M.Sc. in Mechanical Engineering from Pusan National University, Republic of Korea in 2009, and both the MSEE and Ph.D. in Electrical Engineering from University of Notre Dame, USA in 2016. He is currently an assisstant professor in the Department of Electrical Engineering at Parahyangan Catholic University, Indonesia. His main research interests include dynamical systems, control theory, and optimization.	
\end{IEEEbiography}
\end{document}